\newcommand{\modelstyle}[1]{\textsc{#1}\xspace}
\newcommand{\fsynch}{\modelstyle{FSynch}}
\newcommand{\ssynch}{\modelstyle{SSynch}}
\newcommand{\asynch}{\modelstyle{ASynch}}
\newcommand{\look}{\modelstyle{Look}}
\newcommand{\compute}{\modelstyle{Compute}}
\newcommand{\move}{\modelstyle{Move}}
\newcommand{\rend}{\emph{Rendezvous}\xspace}
\newcommand{\FS}{\modelstyle{FState}}
\newcommand{\FC}{\modelstyle{FComm}}
\newcommand{\oneh}{\nicefrac{1}{2}\xspace}
\newcommand{\oneq}{\nicefrac{1}{4}\xspace}
\newcommand{\oneo}{\nicefrac{1}{8}\xspace}
\newtheorem{theorem}{Theorem}
\newtheorem{lemma}[theorem]{Lemma}
\newtheorem{observation}[theorem]{Observation}
\begin{document}
\thispagestyle{empty}

\title{\bf Rendezvous of Two Robots with Constant Memory}

\author{
P. Flocchini\footnotemark[1]
\and
N. Santoro\footnotemark[2]
\and
G. Viglietta\footnotemark[2]
\and
M. Yamashita\footnotemark[3]
}

\def\thefootnote{\fnsymbol{footnote}}
\footnotetext[1]{School of Electrical Engineering and Computer Science, University of Ottawa, Canada.
}
\footnotetext[2]{School of Computer Science, Carleton University, Canada.
}
\footnotetext[3]{Kyushu University, Fukuoka, Japan.
}
 
\date{}

\maketitle

\begin{abstract}
We study the impact that persistent memory has on the
classical {\em rendezvous} problem of two mobile computational entities, called robots, in the plane. It is well known that,
without additional assumptions,
rendezvous is impossible if the entities are oblivious (i.e., have no
persistent memory) even if the system is semi-synchronous ({\ssynch}). It has been recently shown that
 rendezvous is possible even if the
system is asynchronous 
 ({\asynch})
if each robot is endowed with $O(1)$ bits of persistent  memory, can  transmit  $O(1)$ bits in each cycle, and can remember
(i.e., can persistently store) the last  received transmission. This setting 
 is overly powerful. 

In this paper we weaken that setting in two different ways: (1)  by maintaining the
$O(1)$ bits of persistent  memory but removing the communication capabilities;
and (2) 
by maintaining the  $O(1)$ transmission capability  and  the ability to  remember
 the last  received transmission, but 
 removing the ability of an agent to remember its previous activities.
 We call the former setting   {\em finite-state}  (\FS)
 and  the latter   {\em finite-communication} (\FC).
 Note that, even though its use is very different,  in both settings, the amount of persistent memory of a robot is constant.
 
 We investigate the rendezvous problem in these two  weaker settings. We model both settings as a system of robots
 endowed with visible lights: in \FS, a robot can only see its own light, while in \FC a robot can only see the other robot's light.
We prove, among other things,  that 
finite-state  robots can rendezvous in \ssynch, and that   finite-communication robots are able to rendezvous
even in {\asynch}. All proofs are constructive: in each setting, we present a protocol that allows the two robots to
rendezvous in finite time.

\end{abstract}

\section{Introduction}

\subsection{Framework and Background}

{\em Rendezvous}  is 
the process of two computational mobile entities, initially dispersed in a spatial universe, meeting within
finite time at a location, non known \emph{a priori}. When there are more than two entities, this task is known as {\em Gathering}.
These two problems are   core problems in distributed computing by mobile entities. They
have been  intensively and extensively studied  when the universe
 is a connected region of ${\mathbb R}^2$ 
 in which the entities, usually called {\em robots}, can freely move;
 see, for example,  \cite{agmon,BoGT09a,gather,CoPe04,Cord2011,degener2011,DiPe12,FloPSW05,KLOT11c,LinMA07a,viglietta2012,Pr07,SouDY09}.
 
 Each entity is modeled as a point, it  has its own local coordinate system of which it perceives itself as the centre, and has its own unit distance. Each  entity  operates in cycles of   \look, \compute,   \move  activities.
 In each cycle, an entity observes the position of the other entities expressed in its local coordinate system (\look);  using that
 observation as input, it executes a protocol (the same for all robots) and computes a destination point  (\compute); it then moves   to the computed destination point (\move).
 Depending on the activation schedule and the synchronization level, three
 basic types of systems are identified in the literature:  a {\em fully synchronous} system (\fsynch) is equivalent to
 a system where  there is a common clock and at each clock tick all entities are activated simultaneously,
 and \compute and \move are instantaneous;   a  {\em semi-synchronous} system (\ssynch) is like a fully synchronous one except that,
 at each clock tick,  only some entities will be activated (the choice is made by a fair scheduler); in
 a {\em fully asynchronous} system (\asynch), there is no common notion of time, each \compute and \move of each robot can take an unpredictable (but finite) amount of time, and the interval of time between successive activities is finite but unpredictable.
The focus of almost all algorithmic investigations in the continuous setting has been  on
 {\em oblivious} robots, that is when the  memory of the robots is erased at the end of each cycle,
 in other words the robots have no persistent memory (e.g., for an overview see  \cite{book}).

 The importance of  {\em Rendezvous} in the continuous setting derives in part from the fact that 
it separates \fsynch from \ssynch for oblivious robots. Indeed,
{\em Rendezvous} is trivially solvable in a fully synchronous system, without any additional assumption.
However, without additional assumptions,
{\em Rendezvous} is {\em impossible} for oblivious robots  if the system is semi-synchronous \cite{ssynch}.

Interestingly, from a computational point of view, {\em Rendezvous}  is very different from the  {\em Gathering}  problem  of having 
$k\geqslant 3$   robots meet
 in the same point; in fact,  {\em Gathering}  of oblivious robots is always {\em possible}
for any
$k\geqslant 3$   
  even in \asynch   without any additional assumption other than multiplicity detection \cite{gather}.  Furthermore,
  in  \ssynch,  $k\geqslant 3$  robots can  gather  even in spite of a certain number of faults \cite{agmon,BoDT13,DefGMR06},
  and converge in spite of inaccurate measurements \cite{cohen06}; see also \cite{IzBTW11}.
  
The {\em Rendezvous} problem also shows the impact of certain factors.
For example, the problem has a trivial solution if the robots are endowed with {\em consistent compasses}
even if the system is fully asynchronous.
The problem is  solvable in \asynch even  if the local compasses have some degree of inconsistency
(a tilt of an appropriate angle) \cite{ISKIDWY}; the solution is no longer trivial, but does exist.

In this paper, we are concerned with the impact that {\em memory} has on the solvability of the {\em Rendezvous} problem.
In particular, we are interested in determining what type and how much persistent memory would allow
the robots to rendezvous.  What is known in this regard is very little. On the one hand, it is well known that, in absence of additional
assumptions,  without persistent memory  rendezvous is impossible even in \ssynch \cite{ssynch}.
On the other hand, a recent result shows that
  rendezvous is indeed possible even in \asynch if each robot has $O(1)$ bits of persistent  memory {\em and}
 can  transmit  $O(1)$ bits in each cycle  {\em and} can remember
(i.e., can persistently store) the last  received transmission
 \cite{visbits} (see also \cite{viglietta} for size-optimal solutions). 

The conditions of the latter result are  overly powerful.
The natural question is whether the simultaneous presence of
these conditions is truly necessary for rendezvous.

\subsection{Main Contributions}

In this paper we  address this question by  weakening the setting in two different ways, and investigate the {\em Rendezvous} problem in these weaker settings. Even though its use is very different,  in both settings, the amount of persistent memory of a robot is constant.

We  first  examine the      setting  where the two robots have  $O(1)$ bits of
{\em internal}  persistent memory but  cannot communicate; this corresponds to the  {\em   finite-state} (\FS) robots model.
Among other contributions, we prove that
\FS  robots with rigid movements can rendezvous in \ssynch, and that this can be done
using only six internal states. 
The proof is constructive:   we present a protocol that allows the two robots to
rendezvous in finite time under the stated conditions.

We then study the    {\em  finite-communication}   (\FC) setting,
where a robot   can  transmit  $O(1)$ bits in each cycle and  remembers 
the last  received transmission,  but it is otherwise oblivious:  it has no other persistent memory
of its previous observations, computations and transmissions. 
We prove that  two \FC robots with rigid movements
 are able to rendezvous
even in {\asynch};  this is doable even if the different messages that can be sent are just 12. We also prove that only three different messages suffice in \ssynch.
Also for this model  all the proofs are constructive.

Finally,  we consider the {\em Rendezvous} problem when the movement of the robots can be 
interrupted by an adversary (in the above results,  in each cycle  a robot  reaches 
its computed destination point).
 The only constraint on the adversary is that a robot  moves at least a distance $\delta>0$
(otherwise, rendezvous is clearly impossible).
We show that,  with  knowledge of $\delta$,    three internal states are sufficient to solve {\em Rendezvous}  by
{\FS  robots  in \ssynch,
and  three possible messages are sufficient for
\FC robots in  \asynch.
In other words, we prove that rigidity of the movements can be traded with knowledge of $\delta$.

These results are obtained  modeling both settings  as a system of robots
 endowed with a constant number of \emph{visible lights}: a \FS robot can  see only its own light, while a \FC  robot can  see only the other robot's light. Our  results seem to indicate that ``it is better to communicate than to remember''.

In addition to the  specific results on the {\em Rendezvous} problem, an important contribution of this paper
is the  extension of  the classical model of oblivious silent robots into
two  directions: adding finite memory, and enabling finite communication. 

\section{Model and Terminology}
The general model we employ is the standard one,  described in~\cite{book}. The two robots  are autonomous  computational entities
 modeled as points  moving in $\mathbb R^2$. Each robot has its own coordinate system and its own unit distance, which may differ from each other,
 and it always perceives itself as lying at the origin of its own local coordinate system. Each robot operates in cycles that consist of three phases:  \look, \compute, and \move.
In the \look phase it gets the position (in its local coordinate system) of the other robot; in the \compute phase,  it computes a destination point; in the  \move phase it moves to the computed destination point, along a straight line. Without loss of generality, the \look phase is assumed to be instantaneous. 
The robots are anonymous and oblivious, meaning that they do not have distinct identities, they  execute the same algorithm in each \compute phase, and the  input to such algorithm is the snapshot coming from the previous \look phase.

 Here we study two settings; both settings can be described as restrictions of the model
  of  visibile lights introduced in \cite{visbits}. In that model,
 each robot  carries  a persistent memory of constant size, called  {\em light}; the value of the light is called {\em color} or {\em state}, and it is set
by the robot during each \compute phase. Other than their own light, the robots have no other persistent memory of past snapshots and computations.

In the first setting, that of {silent finite-state} (or simply, \FS) robots,   the light of a robot is  visible  only to the robot itself; i.e., the colored light merely encodes an \emph{internal state}.  In the second setting, of  {oblivious finite-communication} (or simply \FC)  robots,  the light of a robot is  visible  only to the other robot; i.e., 
 they can communicate with the other robot through their colored light,  but by their next cycle they forget even
 the color of their own light  (since they do not see  it).
The color a robot sees is used as input during the computation.

 In the {\em asynchronous} (\asynch) model,
the robots are activated independently,
and the duration of each \compute, \move and  inactivity 
is finite but unpredictable. As a result, the robots do not have a common notion
of time, robots can be seen while moving, and computations can be made based
on obsolete observations.
In the   {\em semi-synchronous}
(\ssynch) models
the activations of  robots can be logically divided into global
rounds; in each round, one or both robots are  activated, obtain  the same snapshot, compute, and perform
their move.
It is assumed that the activation schedule  is fair, i.e., each robot is activated infinitely often.

Depending on whether or not the adversary can stop a robot before it reaches its computed destination, the movements are called {\em non-rigid}
and {\em rigid}, respectively. In the case of non-rigid movements,     there exists a constant $\delta > 0$ 
 such that if the destination point's distance
is smaller than $\delta$,  the robot will reach it; otherwise, it  will move
towards it by at least $\delta$.
Note that, without this assumption, an adversary could make it impossible for any robot to ever reach its destination, following a classical Zenonian argument. 

The two robots solve the {\em Rendezvous} problem if, within finite time, they move to the same point  and do not move from there;
the  meeting point is not determined \emph{a priori}.
A rendezvous algorithm for \ssynch (resp., \asynch) is a protocol that allows the robots to solve the {\em Rendezvous} problem under any possible 
execution schedule in \ssynch  (resp., \asynch).
A particular class of algorithms,   denoted by   $\mathcal L$, is that where
each robot may only compute 
a destination point of the form $\lambda\cdot other.position$, for some  $\lambda\in \mathbb R$
 obtained as a function  only of the light  of which the robot is aware (i.e., its internal state in the \FS model, or the other robot's color in the \FC model).
The algorithms of this class are of interest because they
operate also when the coordinate system of a robot is not  self-consistent (i.e., it can unpredictably rotate, change its scale or undergo a reflection).

\section{Finite-State Robots}
We fist consider \FS robots and we   start by identifying a simple impossibility result 
for algorithms  in $\mathcal L$.

\begin{theorem}\label{th:FSImpo}
In  \ssynch, \rend of two  \FS robots   is unsolvable by algorithms in $\mathcal L$, regardless of the amount of their internal memory.
\end{theorem}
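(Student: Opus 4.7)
The plan is to prove impossibility by constructing, for any algorithm $\mathcal A\in\mathcal L$, a fair $\ssynch$ schedule (with suitable local coordinate choices) under which the two robots never meet. The starting point is the geometric fact that a robot in state $q$ at position $A$ moves to $A+\lambda_q(B-A)$ independently of its local frame; both robots therefore remain on the line through their initial positions, and the inter-robot distance $d>0$ evolves purely multiplicatively. A single activation of the robot in state $q$ scales $d$ by $|1-\lambda_q|$, while a joint activation scales it by $|1-\lambda_{q_1}-\lambda_{q_2}|$. Rendezvous accordingly requires that at some round one of these scaling factors be exactly zero.

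Next I would run the standard symmetric execution: the adversary activates both robots every round and picks each robot's frame so that it always sees the other at the local position $(1,0)$. An easy induction then shows that both robots share a common state $q_t$ at round $t$, so the sequence $(q_t)$ is deterministic and, since the state set $Q$ is finite, eventually periodic. If $\lambda_{q_t}\neq 1/2$ for every $t$, the symmetric schedule is itself fair and never produces a rendezvous, contradicting correctness. Otherwise let $t^\ast$ be the first round with $\lambda_{q_{t^\ast}}=1/2$, play symmetrically through round $t^\ast-1$, and switch to a Zeno-style tail starting at round $t^\ast$.

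In the tail phase the adversary alternates single-robot activations, again presenting the snapshot $(1,0)$ to whichever robot is active. Such a round multiplies the distance by $|1-\lambda_q|$, which is nonzero whenever the active robot's state $q$ satisfies $\lambda_q\neq 1$. The key feasibility observation is that the three activation options cannot all be ``unsafe'' at the same round: the simultaneous equalities $\lambda_{q_1}=1$, $\lambda_{q_2}=1$ and $\lambda_{q_1}+\lambda_{q_2}=1$ are mutually inconsistent. Hence the adversary always has at least one safe move, and by further exploiting its control over the transition $\delta(q,\cdot)$ via the coordinate choice, it can steer the merged state so that both ``safe single-$1$'' and ``safe single-$2$'' rounds remain available infinitely often, securing fairness while keeping the distance strictly positive.

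The main obstacle is exactly this last piece of finite-state bookkeeping: showing that no matter how the algorithm structures $\delta$, the adversary cannot be trapped into (a) activating some robot in a $\lambda=1$ state, or (b) neglecting one robot forever. I expect to settle it by a pigeonhole argument on the finite reachable subgraph of $Q\times Q$, combined with the incompatibility noted above, to extract a cycle along which safe single-activations of both robots alternate, or a safe joint activation that reshuffles the merged state. The geometric content of the theorem is essentially captured by the first two paragraphs; the delicate work, and where I expect to spend most of the proof's effort, lies in verifying that the tail phase can be scheduled to be simultaneously rendezvous-free and fair.
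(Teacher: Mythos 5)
Your geometric reduction is sound and matches the paper's: for an algorithm in $\mathcal L$ the robots stay collinear, a single activation of a robot in state $q$ multiplies the distance by $|1-\lambda_q|$ and a joint activation by $|1-\lambda_{q_1}-\lambda_{q_2}|$, so the adversary only needs a fair schedule along which every such factor is nonzero. But the step you defer (``I expect to settle it by a pigeonhole argument\dots'') is the actual content of the theorem, and as sketched your tail phase does not go through. ``At least one of the three options is safe at every round'' does not imply fairness: from a configuration in which robot $1$ is in a state $a$ with $\lambda_a=0$ and a self-loop transition while robot $2$ is in a state $b$ with $\lambda_b=1$, the only distance-preserving move is to activate robot $1$ alone, forever --- every fair continuation gathers the robots. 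So you would have to exhibit an invariant guaranteeing such trap configurations are never reached, and your proposal does not supply one.

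The paper closes exactly this hole with a simpler schedule, using the fact that for algorithms in $\mathcal L$ the next state (not only $\lambda$) is a function of the robot's own internal state alone: activate the robots strictly alternately, so both traverse the same deterministic state sequence $q_0,q_1,\dots$ with one robot a single step ahead. A dangerous state ($\lambda_{q_k}=1$) is thereby detected one round early, while both robots are still in state $q_k$, and is neutralized by activating both simultaneously: they merely swap positions (factor $|1-1-1|=1$) and the invariant is restored. This removes your initial joint phase waiting for a $\lambda=1/2$ state, the case analysis in the tail, and the adaptive re-scaling of local frames your symmetric phase presupposes (for a fixed, self-consistent coordinate system the adversary cannot keep presenting the snapshot $(1,0)$ while the distance shrinks). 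If you insist on the more general reading in which the state transition may also depend on the observed position, you would further need to justify the adversary's adaptive control of the coordinate systems; under the intended reading of $\mathcal L$, where the whole \compute phase depends only on the light the robot is aware of, the alternating-schedule invariant settles the matter directly.
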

\begin{proof}
For each robot, the destination point and the next state are a function of the internal state only. Assuming that both robots start in the same state, we keep activating them one at a time, alternately. Hence, every other turn they are in the same state. As soon as the first robot attempts to move to the other robot's location, we activate both robots simultaneously, making them switch positions. By repeating this pattern, the robots never gather.
\end{proof}

\noindent Thus the computation of the destination 
must  take into account more than just the
 lights (or states) of which the robot is aware. 
 
 The approach we use to circumvent this impossibility result is to have each robot
 use its own unit of distance as a computational tool; recall that the two robots might have different units, and they are not known to each other.
We propose Algorithm~\ref{alg1} for \rend in   \ssynch, also illustrated in Figure~\ref{f1}. 
Each robot has six internal states, namely $S_{\rm start}$, $S_1$, $S_2^{\rm left}$, $S_2^{\rm right}$, $S_3$, and $S_{\rm finish}$. Both robots are assumed to begin their execution in $S_{\rm start}$. Each robot lies in the origin of its own local coordinate system
 and the two robots have no agreement on axes orientations or unit distance.

Intuitively, the robots try to reach a configuration in which they both observe the other robot at distance not lower than $1$ (their own unit). From this configuration, they attempt to meet in the midpoint. If they never meet because they are never activated simultaneously, at some point one of them notices that its observed distance is lower than $1$. This implies a breakdown of symmetry that enables the robots to finally gather.

In order to reach the desired configuration in which they both observe a distance not lower than $1$, the two robots first try to move farther away from each other if they are too close. If they are far enough, they memorize the side on which they see each other (left or right), and try to switch positions. If only one of them is activated, they gather; otherwise they detect a side switch and they can finally apply the above protocol.
This is complicated by the fact that the robots may disagree on the distances they observe. To overcome this difficulty, they use their ability to detect a side switch to understand which distance their partner observed. If the desired configuration is not reached because of a disagreement, a breakdown of symmetry occurs, which is immediately exploited to gather anyway.
As soon as the two robots coincide at the end of a cycle, they never move again, and \rend is solved.

\begin{algorithm}
\caption{\ Rendezvous for rigid \ssynch with no unit distance agreement and six internal states}\label{alg1}
\begin{algorithmic}[1]
\State $dist \gets \lVert other.position\rVert$
\If{$dist = 0$}
	\State terminate
\EndIf
\If{$other.position.x > 0$}
	\State $dir \gets$ right
\ElsIf{$other.position.x < 0$}
	\State $dir \gets$ left
\ElsIf{$other.position.y > 0$} \Comment{$other.position.x = 0$}
	\State $dir \gets$ right
\Else
	\State $dir \gets$ left
\EndIf
\If{$me.state = S_{\rm start}$}
	\If{$dist < 1$}
		\State $me.state \gets S_1$
		\State $me.destination \gets other.position\cdot (1-1/dist)$ 
	\Else
		\State $me.state \gets S_2^{dir}$
		\State $me.destination \gets other.position$
	\EndIf
\ElsIf{$me.state = S_1$}
	\If{$dist \leqslant 1$}
		\State $me.state \gets S_{\rm finish}$
		\State $me.destination \gets (0,0)$
	\Else
		\State $me.state \gets S_2^{dir}$
		\State $me.destination \gets other.position$
	\EndIf
\ElsIf{$me.state = S_2^d$}
	\If{$dir = d$}
		\State $me.state \gets S_{\rm finish}$
		\State $me.destination \gets other.position$
	\ElsIf{$dist < \oneh$} \Comment{side switch detected}
		\State $me.state \gets S_{\rm finish}$
		\State $me.destination \gets (0,0)$
	\Else
		\State $me.destination \gets other.position/2$
		\If{$dist < 1$}
			\State $me.state \gets S_3$
		\EndIf
	\EndIf
\algstore{myalg1}
\end{algorithmic}
\end{algorithm}

\begin{algorithm}
\begin{algorithmic}[1]
\algrestore{myalg1}
\ElsIf{$me.state = S_3$}
	\State $me.state \gets S_{\rm finish}$
	\If{$dist < \oneq$}
		\State $me.destination \gets (0,0)$
	\Else \Comment{$\oneq \leqslant d < \oneh$}
		\State $me.destination \gets other.position$
	\EndIf
\Else \Comment{$me.state = S_{\rm finish}$}
	\If{$dist \leqslant 1$}
		\State $me.destination \gets (0,0)$
	\Else
		\State $me.destination \gets other.position$
	\EndIf
\EndIf
\end{algorithmic}
\end{algorithm}

\begin{figure}[ht]
\centering{\includegraphics[width=10.2cm]{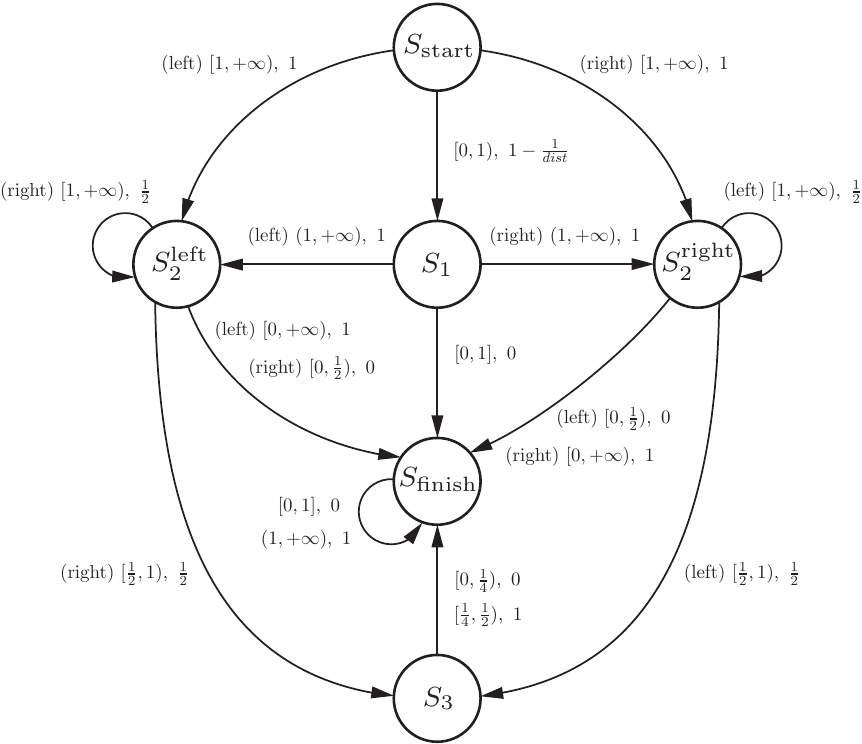}}
\caption{Illustration of Algorithm~\ref{alg1}. A label of the form $(d) I, \lambda$ denotes a transition that applies when the other robot is seen in direction $d\in \{{\rm left}, {\rm right}\}$ and its observed distance lies in the interval $I\subset \mathbb{R}$. The computed destination point is $\lambda \cdot other.position$. For example, a robot in state $S_{\rm start}$ perceiving the other at distance $\geqslant 1$ on the right  will move to the position of the other robot and will change state to $S_2^{\rm right}$.
}
\label{f1}
\end{figure}

To analyze the correctness  of Algorithm~\ref{alg1}, some terminology is needed.  
In the following, the two robots will be called $r$ and $s$, respectively. An expression of the form $(S_r, S_s, I_r, I_s)$ denotes a configuration in which robot $r$ (resp.\ $s$) is in state $S_r$ (resp.\ $S_s$), and the distance at which it sees the other robot lies in the interval $I_r$ (resp.\ $I_s$), according to its own distance function. Therefore, the starting configuration of $r$ and $s$ is $(S_{\rm start}, S_{\rm start},$ $[0,+\infty), [0,+\infty))$.

With abuse of notation, we will say that a robot is in state $S_2^{=}$ if it is in state $S_2^{\rm left}$ (resp.\ $S_2^{\rm right}$) and it sees the other robot on its left (resp.\ right). Analogously, a robot is said to be in state $S_2^{\neq}$ if its state is $S_2^{\rm left}$ or $S_2^{\rm right}$ and it has detected a switch.
For a configuration $C$, the expression $C\downarrow$ means that, whenever the two robots reach $C$, \rend is eventually solved.

\begin{observation}\label{o2}
$(S_a, S_b, I_a, I_b) \downarrow$ if and only if $(S_b, S_a, I_b, I_a) \downarrow$.
\end{observation}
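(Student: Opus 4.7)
The proof is essentially a symmetry argument, since the labels $r$ and $s$ are external designations introduced only for the analysis: the robots are anonymous and both execute the same algorithm. My plan is therefore to formalize this symmetry and use it to transfer any successful execution under one labeling to a successful execution under the swapped labeling.

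First I would unpack the definition of $C \downarrow$. The predicate asserts that for \emph{every} adversarial schedule (activation pattern) starting from configuration $C$, \rend is eventually achieved. Since the two robots are indistinguishable to the protocol (they have no identifiers, the same six-state transition table, and no communication in the \FS model beyond their own internal state), any configuration $(S_a, S_b, I_a, I_b)$ and its mirror $(S_b, S_a, I_b, I_a)$ describe the same physical system up to a renaming of which robot we decide to call $r$ and which we call $s$.

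Next I would argue the forward direction (the reverse is identical by switching $a \leftrightarrow b$). Suppose $(S_a, S_b, I_a, I_b) \downarrow$, and consider any schedule $\sigma'$ starting from $(S_b, S_a, I_b, I_a)$. Define a schedule $\sigma$ starting from $(S_a, S_b, I_a, I_b)$ by simply swapping every occurrence of \qq{activate $r$} with \qq{activate $s$} in $\sigma'$. Because both robots run the same algorithm, each step under $\sigma$ mirrors the corresponding step under $\sigma'$: the robot labeled $r$ in $\sigma$ behaves exactly as the robot labeled $s$ in $\sigma'$, and vice versa. By the hypothesis $(S_a, S_b, I_a, I_b) \downarrow$, the execution of $\sigma$ leads to the two robots occupying the same point, and hence so does $\sigma'$, since rendezvous is a symmetric predicate on positions.

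There is essentially no hard step here; the one thing to be careful about is that the adversarial schedule is itself symmetric under the swap, i.e., that the class of admissible schedules in \ssynch is closed under relabeling of robots. This is immediate from the definition (fairness and simultaneous/solo activations are both label-invariant), so the observation follows.
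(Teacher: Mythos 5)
Your proof is correct: it is precisely the relabeling/symmetry argument that the paper leaves implicit, as the observation is stated without proof and relies on the robots being anonymous, identical in behavior, and on the class of admissible schedules being closed under swapping the two labels. Nothing is missing.
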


\begin{observation}
If $(S_r, S_s, I_r, I_s) \downarrow$ and $I'_r\subseteq I_r$, then $(S_r, S_s, I'_r, I_s) \downarrow$.
\end{observation}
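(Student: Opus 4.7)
The plan is to observe that this statement is essentially a tautology unpacking the definition of the $\downarrow$ notation: the set of concrete configurations fitting the pattern $(S_r, S_s, I'_r, I_s)$ is literally contained in the set fitting $(S_r, S_s, I_r, I_s)$, so a universal guarantee over the larger set transfers verbatim to the smaller one. I expect essentially no obstacle here beyond being careful with the semantics of the notation.

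More concretely, I would first recall the definition given just above: $(S_r, S_s, I_r, I_s) \downarrow$ means that for every concrete configuration $C$ of the two robots in which $r$ is in state $S_r$, $s$ is in state $S_s$, $r$ observes $s$ at distance (in $r$'s unit) belonging to $I_r$, and $s$ observes $r$ at distance (in $s$'s unit) belonging to $I_s$, \rend is eventually solved from $C$ under every admissible activation schedule. Note that this quantifies over all such $C$, and implicitly over all schedules.

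Now suppose $(S_r, S_s, I_r, I_s) \downarrow$ and $I'_r \subseteq I_r$, and let $C'$ be any concrete configuration fitting the pattern $(S_r, S_s, I'_r, I_s)$. Then in $C'$ robot $r$ observes $s$ at some distance $d'_r \in I'_r$, and since $I'_r \subseteq I_r$ we have $d'_r \in I_r$; all other matching conditions coincide. Hence $C'$ also fits the pattern $(S_r, S_s, I_r, I_s)$, so by hypothesis \rend is eventually solved from $C'$ under every admissible schedule. Because $C'$ was arbitrary, this proves $(S_r, S_s, I'_r, I_s) \downarrow$, which is the claim.

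The only thing worth double-checking is that no hidden parameters of the configuration (for instance, the robots' local coordinate systems or their ratio of unit distances) get restricted when we shrink $I_r$ to $I'_r$; since the pattern $(S_r, S_s, I_r, I_s)$ leaves all such parameters free and the $\downarrow$ guarantee is required to hold uniformly over them, this is indeed the case, so the argument above is complete.
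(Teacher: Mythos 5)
Your proof is correct and matches the paper's intent: the paper states this observation without proof precisely because it is the containment-of-configuration-sets tautology you spell out. Your unpacking of the definition of $\downarrow$ and the remark that no hidden parameters are restricted is exactly the implicit justification.
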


\begin{lemma}\label{l4}
$(S_{\rm finish}, S_{3}, [0,1], [\oneq,\oneh)) \downarrow$.
\end{lemma}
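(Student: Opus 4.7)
The plan is to directly trace Algorithm~\ref{alg1} starting from the configuration $(S_{\rm finish}, S_3, I_r, I_s)$ with $I_r=[0,1]$ and $I_s=[\oneq,\oneh)$, doing a short case analysis on the \ssynch scheduler's choices. The key observation is that this configuration is ``stable under activations of $r$ alone'' and that any activation involving $s$ immediately produces rendezvous.

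First I would note that if only $r$ is activated, then $r$ is in state $S_{\rm finish}$ and observes a distance $d_r\leqslant 1$; thus the final branch of the algorithm sets $r$'s destination to $(0,0)$ and keeps its state equal to $S_{\rm finish}$. Hence $r$ does not move, and, since $s$ has not been activated either, nothing in the configuration changes: $s$ is still in $S_3$, and both observed distances still lie in their original intervals. By fairness of the \ssynch scheduler, there must be a round in which $s$ is activated (possibly together with $r$).

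In such a round, $s$ is in state $S_3$ and observes $d_s\in[\oneq,\oneh)$, which is the ``$\oneq\leqslant d<\oneh$'' branch of the $S_3$ code: $s$ transitions to $S_{\rm finish}$ and sets its destination to $other.position$, i.e., moves onto $r$'s current location. Meanwhile $r$ either is not activated, or is activated and (as above) stays put because $d_r\leqslant 1$ still holds. In either subcase, $r$ remains at its previous position while $s$ arrives at it, so at the end of the round the two robots coincide and are both in state $S_{\rm finish}$.

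Finally I would close by observing that once both robots share a point and are both in $S_{\rm finish}$, any further activation sees $dist=0\leqslant 1$ and sets the destination to $(0,0)$, so neither robot ever moves again and \rend is solved. There is no real obstacle here: the only thing to double-check is that the ``$r$-alone'' schedule really leaves the configuration invariant (so the fair activation of $s$ encounters exactly the hypothesis of the lemma), which is immediate from the $S_{\rm finish}$ branch of the code.
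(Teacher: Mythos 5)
Your proposal is correct and matches the paper's (one-line) argument exactly: $r$ in $S_{\rm finish}$ with observed distance $\leqslant 1$ never moves, and $s$ in $S_3$ with observed distance in $[\oneq,\oneh)$ moves onto $r$ at its first activation, after which neither robot moves again. You have simply spelled out the case analysis and the invariance under $r$-only activations in more detail than the paper does.
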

\begin{proof}
Robot $r$ keeps staying still, while robot $s$ moves to $r$ as soon as it is activated.
\end{proof}

\begin{lemma}\label{l5}
$(S_{\rm finish}, S_2^{\neq}, [0,1], [\oneh,+\infty)) \downarrow$.
\end{lemma}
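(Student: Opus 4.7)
The plan is to show that, under any fair \ssynch schedule, the configuration enters the hypothesis of Lemma~\ref{l4} after finitely many rounds. I would first argue that $r$ is pinned for the whole execution: in $S_{\rm finish}$ with observed distance $\leq 1$, the $S_{\rm finish}$ branch of Algorithm~\ref{alg1} sets $me.destination \gets (0,0)$ and keeps the state; and since $r$ is stationary and $s$ will only ever move toward $r$ (as analyzed next), the actual inter-robot distance is non-increasing, so $r$'s observed distance stays in $[0,1]$.

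I would then trace what $s$ does whenever it is activated. Being in $S_2^d$ with $dir \ne d$ and observed distance in $[\oneh,+\infty)$, $s$ falls through to the last sub-branch of the $S_2^d$ case and executes $me.destination \gets other.position/2$, exactly halving its actual distance to $r$. Translating $s$ toward $r$ along the $s$-to-$r$ ray preserves $s$'s orientation and the side on which $r$ appears, so the side-switch condition $dir \ne d$ is preserved at the next look. Writing $x$ for $s$'s observed distance just before this activation, the algorithm additionally sets $me.state \gets S_3$ iff $x < 1$.

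To close the argument, I would invoke fairness: $s$ is activated infinitely often, and each such activation halves $x$. Let $k$ be the first activation at which $x < 1$; the value was $\geq 1$ just before, so at activation $k$ it lies in $[\oneh, 1)$. After that activation $s$ is in $S_3$ with observed distance in $[\oneq, \oneh)$, while $r$'s observed distance has only shrunk within $[0,1]$. The configuration then lies inside $(S_{\rm finish}, S_3, [0,1], [\oneq, \oneh))$, which is $\downarrow$ by Lemma~\ref{l4} and the subset observation. The step I expect to be the main obstacle, handled above, is checking that the side-switch condition and $r$'s stationarity truly persist across every halving; once those invariants are in hand, the finitary bound is just a simple induction on $\lfloor \log_2 x \rfloor$.
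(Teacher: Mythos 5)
Your proof is correct and follows the same route as the paper's: $r$ stays pinned in $S_{\rm finish}$ because its observed distance never exceeds $1$, while $s$ repeatedly halves the distance until its observed distance first falls into $[\oneh,1)$, at which point it switches to $S_3$, moves to the midpoint, and Lemma~\ref{l4} finishes the argument. You merely spell out the invariants (persistence of the side-switch condition, monotone decrease of the distance) that the paper leaves implicit.
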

\begin{proof}
Robot $s$ keeps moving to the midpoint, while robot $r$ never moves, because it keeps observing a distance not greater than $1$. As soon as $s$ observes a distance smaller than $1$ (hence in $[\oneh, 1)$), its state becomes $S_3$ and it moves to the midpoint again. Now the configuration is $(S_{\rm finish}, S_{3}, [0,\oneh], [\oneq,\oneh))$, and Lemma~\ref{l4} applies.
\end{proof}

\begin{lemma}\label{l9}
$(S_{\rm finish}, S_2^{=}, [0,1], [0,+\infty)) \downarrow$.
\end{lemma}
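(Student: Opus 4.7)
The plan is to observe that, in the configuration $(S_{\rm finish}, S_2^{=}, [0,1], [0,+\infty))$, robot $r$ is effectively frozen until $s$ does something, and that when $s$ finally acts it moves directly onto $r$.

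First I would check that activating $r$ has no effect on the configuration. Since $r$ is in $S_{\rm finish}$ with observed distance $d_r \in [0,1]$, the $S_{\rm finish}$ branch of Algorithm~\ref{alg1} sets $me.destination \gets (0,0)$ and leaves $me.state$ unchanged. Hence $r$ stays put and remains in $S_{\rm finish}$. Consequently, as long as $s$ is not activated, nothing moves: $r$'s state and the spatial positions of both robots are invariant, so in particular $s$'s observed direction to $r$ does not change and $s$ remains in $S_2^{=}$.

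Next I would invoke fairness: $s$ is activated in some round. Because $s$ is still in $S_2^{=}$, the observed $dir$ matches the superscript $d$ of its state $S_2^d$, so the first branch of the $S_2^d$ code fires: $s$ transitions to $S_{\rm finish}$ and sets its destination to $other.position$, namely the current location of $r$. Here I would handle two sub-cases. If only $s$ is activated this round, then by rigidity $s$ reaches $r$'s point exactly. If $r$ is activated simultaneously, $r$ still observes $d_r \leq 1$ (snapshots are taken before any move), so $r$'s destination is again $(0,0)$ and $r$ does not move; thus $s$, moving to $r$'s point by rigidity, still lands on $r$. In either case, the round ends with both robots coinciding and both in state $S_{\rm finish}$.

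Finally I would close the argument by noting that from this coincident configuration the $dist = 0$ guard at the top of Algorithm~\ref{alg1} triggers on every subsequent activation, so neither robot ever moves again and \rend is solved. The only place one needs to be careful is the simultaneous-activation case of the decisive round, where it is essential that in $S_{\rm finish}$ with $d_r \leq 1$ robot $r$'s destination is $(0,0)$ (not $other.position$); otherwise $r$ could slide away just as $s$ arrives. Apart from this, the proof is routine.
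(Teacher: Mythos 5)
Your proof is correct and follows exactly the same reasoning as the paper's (one-line) argument: $r$ in $S_{\rm finish}$ with observed distance in $[0,1]$ always computes destination $(0,0)$ and so never moves, while $s$ in $S_2^{=}$ moves to $r$'s position upon its first activation, after which the $dist=0$ guard freezes both robots. Your extra care about the simultaneous-activation round and the invariance of the configuration while $s$ sleeps is a sound elaboration of the same idea.
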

\begin{proof}
Robot $r$ keeps staying still, while robot $s$ moves to $r$ as soon as it is activated.
\end{proof}

\begin{lemma}\label{l7}
$(S_3, S_2^{\neq}, [\oneq,\oneh), [0,+\infty)) \downarrow$.
\end{lemma}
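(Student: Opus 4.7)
My plan is to do a case analysis on the first activation after the configuration is reached, combined with a split on $s$'s observed distance $d_s$. The key structural fact is that $r$, being in $S_3$ with $d_r\in[\oneq,\oneh)$, will on any activation move directly onto $s$'s current position and transition to $S_{\rm finish}$; in particular, if only $r$ is activated, rendezvous is achieved at once.

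For $s$ in $S_2^{\neq}$ the three ranges $d_s<\oneh$, $d_s\in[\oneh,1)$, $d_s\geqslant 1$ produce qualitatively different behavior. If $d_s<\oneh$, $s$ stays and becomes $S_{\rm finish}$, so either the robots coincide (when $r$ also moved) or the configuration becomes $(S_3,S_{\rm finish},[\oneq,\oneh),[0,\oneh))$, which reduces to Lemma~\ref{l4} via Observation~\ref{o2}. If $d_s\in[\oneh,1)$, $s$ moves to the midpoint and becomes $S_3$, halving the true distance: with both robots active the resulting $(S_{\rm finish},S_3,[\oneo,\oneq),[\oneq,\oneh))$ is Lemma~\ref{l4}, and with only $s$ active we obtain $(S_3,S_3,[\oneo,\oneq),[\oneq,\oneh))$, from which any activation of $s$ brings it onto $r$ and an $r$-only activation yields a further $(S_{\rm finish},S_3,\cdot,\cdot)$ covered by Lemma~\ref{l4}.

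The geometrically delicate case is $d_s\geqslant 1$ with both robots activated: $r$ jumps onto $s$'s old position while $s$ only reaches the midpoint, so from $s$'s new vantage the direction to $r$ is flipped, the side-switch flag is reversed, and $s$ now finds itself in $S_2^{=}$. The configuration $(S_{\rm finish},S_2^{=},[\oneo,\oneq),[\oneh,+\infty))$ is then handled by Lemma~\ref{l9}. If instead only $s$ is activated, $r$ does not move and the side switch persists, giving $(S_3,S_2^{\neq},[\oneo,\oneq),[\oneh,+\infty))$, which has the same qualitative shape as the lemma's hypothesis but with strictly smaller intervals for $r$.

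The main obstacle is ruling out an indefinite chain of such shrinking configurations. I would close this with two complementary arguments. First, each "only $s$ with $d_s\geqslant 1$" step halves $d_s$, so after finitely many iterations $d_s$ must drop below $1$ and $s$ transitions out of $S_2^{\neq}$, reducing to the $d_s\in[\oneh,1)$ case above. Second, by fairness $r$ is eventually activated; at that moment $d_r<\oneq$ already, so $r$ stays in place and becomes $S_{\rm finish}$, yielding $(S_{\rm finish},S_2^{\neq},I_r,[\oneh,+\infty))$ with $I_r\subseteq[0,1]$, which is covered by Lemma~\ref{l5}. Either route terminates after finitely many steps, completing the proof.
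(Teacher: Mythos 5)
Your proposal is correct and follows essentially the same route as the paper's proof: a case split on which robots are activated and on $s$'s observed distance, reducing each branch to Lemmas~\ref{l4} and~\ref{l9}, including the key observation that in the \qq{both activated, $d_s\geqslant 1$} branch the robots swap sides so $s$ lands in $S_2^{=}$. The only (harmless) difference is in the \qq{only $s$ activated, $d_s\geqslant\oneh$} branch, where the paper argues directly that $r$ freezes (its observed distance having dropped below $\oneq$) while $s$ keeps halving until it enters $S_3$ and lands on $r$, whereas you package the same facts as a terminating chain plus an appeal to Lemma~\ref{l5} when $r$ wakes up mid-chain.
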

\begin{proof}
If only robot $r$ is activated, it moves to $s$, and \rend is solved.

If only robot $s$ is activated, two cases arise. If the distance observed by $s$ is less than $\oneh$, configuration $(S_3, S_{\rm finish}, [\oneq,\oneh), [0,\oneh))$ is reached, and Lemma~\ref{l4} applies. Otherwise, if the distance observed by $s$ is at least $\oneh$, $s$ moves to the midpoint (and possibly switches to $S_3$). As a consequence, the distance observed by $r$ becomes less than $\oneq$, hence it stays still forever (it only switches to $S_{\rm finish}$ as soon as it is activated). On the other hand, $s$ keeps moving to the midpoint, until it observes a distance lower than $1$, switches to $S_3$, and finally moves to $r$, solving \rend.

If both robots are activated on the first cycle, three cases arise.
\begin{itemize}
\item If the distance observed by $s$ is less than $\oneh$, $r$ moves to $s$ and $s$ stays still, hence \rend is solved.
\item If the distance observed by $s$ lies in $[\oneh,1)$, configuration $(S_{\rm finish}, S_3,$ $[\oneo,\oneq), [\oneq,\oneh))$ is reached, and Lemma~\ref{l4} applies.
\item If the distance observed by $s$ is at least $1$, configuration $(S_{\rm finish}, S_2^{=},$ $[\oneo,\oneq), [\oneh,+\infty))$ is reached (the two robots switch sides), and Lemma~\ref{l9} applies.
\end{itemize}
\end{proof}

\begin{lemma}\label{l8}
$(S_2^{\neq}, S_2^{\neq}, [0,\oneh), [\oneh,+\infty)) \downarrow$.
\end{lemma}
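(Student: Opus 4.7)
The plan is to do a case analysis on which of the two robots are activated in the first round. First, I would record the behavior that Algorithm~\ref{alg1} prescribes for a robot in state $S_2^{\neq}$: if its observed distance is smaller than $\oneh$, it executes the ``side switch detected'' branch, staying put at $(0,0)$ and transitioning to $S_{\rm finish}$; if its observed distance is at least $\oneh$, it moves to the midpoint, transitioning to $S_3$ precisely when that distance is strictly less than $1$. Thus, as long as the system remains in the stated configuration class, robot $r$ always stays put and switches to $S_{\rm finish}$ whenever activated, while robot $s$, whenever activated, always halves the physical distance between the two robots.

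If only $r$ is activated, the resulting configuration is $(S_{\rm finish}, S_2^{\neq}, [0,\oneh), [\oneh,+\infty))$, which is handled by Lemma~\ref{l5}. If both robots are activated, $r$ stops in state $S_{\rm finish}$ while $s$ moves to the midpoint, so the outcome is either $(S_{\rm finish}, S_3, [0,\oneq), [\oneq,\oneh))$ when $d_s<1$, covered by Lemma~\ref{l4}, or $(S_{\rm finish}, S_2^{\neq}, [0,\oneq), [\oneh,+\infty))$ when $d_s\geqslant 1$, covered again by Lemma~\ref{l5}.

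The subtle case is when only $s$ is activated. If $d_s<1$, then $s$ switches to $S_3$ and the new configuration $(S_2^{\neq}, S_3, [0,\oneq), [\oneq,\oneh))$ reduces via Observation~\ref{o2} to a subconfiguration of $(S_3, S_2^{\neq}, [\oneq,\oneh), [0,+\infty))$, hence to Lemma~\ref{l7}. If instead $d_s\geqslant 1$, then $s$ stays in $S_2^{\neq}$ and the configuration falls back into the very same class $(S_2^{\neq}, S_2^{\neq}, [0,\oneh), [\oneh,+\infty))$ we started from, except that $d_s$ has been halved. This apparent loop is the main obstacle.

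To close the argument, I would observe that each recurrence of this last subcase strictly halves $d_s$; hence after finitely many consecutive rounds in which only $s$ is activated, either $r$ is eventually activated (which, by fairness, must happen unless the loop has already terminated) and we exit through one of the first two cases, or $d_s$ drops below $1$ and the next activation of $s$ falls into the Lemma~\ref{l7} subcase just handled. In every scenario the configuration eventually matches the hypothesis of a previously proved lemma, so \rend is solved in finite time.
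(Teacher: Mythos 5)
Your proposal is correct and follows essentially the same route as the paper's proof: the identical three-way case split on which robots are activated, reduction to Lemmas~\ref{l4}, \ref{l5}, and \ref{l7} in the same subcases, and the same termination argument (the distance seen by $s$ halves each time the looping subcase recurs, so eventually it drops below $1$ or fairness forces $r$ to be activated). The only cosmetic difference is that you explicitly invoke Observation~\ref{o2} to flip the configuration before applying Lemma~\ref{l7}, which the paper leaves implicit.
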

\begin{proof}
If both robots are activated, two cases arise.  If the distance observed by $s$ is less than $1$, configuration $(S_{\rm finish}, S_3, [0,\oneq), [\oneq,\oneh))$ is reached, and Lemma~\ref{l4} applies. Otherwise, if the distance is at least $1$, configuration $(S_{\rm finish}, S_2^{\neq}, [0,\oneq), [\oneh,+\infty))$ is reached, and Lemma~\ref{l5} applies.

If only $r$ is activated, configuration $(S_{\rm finish}, S_2^{\neq}, [0,\oneh), [\oneh,+\infty))$ is reached, and Lemma~\ref{l5} applies.

If only robot $s$ is activated, two cases arise. If the distance observed by $s$ is less than $1$, configuration $(S_2^{\neq}, S_3, [0,\oneq), [\oneq,\oneh))$ is reached, and Lemma~\ref{l7} applies. Otherwise, if the distance is at least $1$, the configuration remains $(S_2^{\neq}, S_2^{\neq}, [0,\oneh), [\oneh,+\infty))$, but the distance between the two robots has halved. As the execution progresses, this case cannot repeat forever, because eventually the distance observed by $s$ becomes less than $1$, or $r$ is activated.
\end{proof}

\begin{lemma}\label{l2}
$(S_2^{\neq}, S_2^{\neq}, [1,+\infty), [1,+\infty)) \downarrow$.
\end{lemma}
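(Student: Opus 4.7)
The plan is to reduce this configuration to the already-established Lemmas~\ref{l7} and~\ref{l8} via a case analysis on the scheduler. First I handle the double-activation case: if both robots are activated in the current round, each observes $dist\geqslant 1>\oneh$ and $dir\neq d$, so the $S_2^d$ branch of Algorithm~\ref{alg1} sends them both to $other.position/2$, the geometric midpoint; they arrive together and \rend is solved. So I may assume that every round activates exactly one robot, and then show that this single-activation cascade must terminate.

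A single activation of, say, $r$ sends $r$ to the midpoint without a state change, since $dist\geqslant 1$ prevents the ``switch to $S_3$'' sub-case from firing. Because $r$ travels along the segment joining the two robots, in the next round both robots still see each other in the direction opposite to their stored $d$, so the configuration remains of the form $(S_2^{\neq},S_2^{\neq},\cdot,\cdot)$ and both observed distances are exactly halved. As long as both observations stay $\geqslant 1$ this pattern can repeat, but the Euclidean distance between the two robots strictly halves each round. Since the unit distances of $r$ and $s$ are fixed positive constants, after finitely many rounds at least one observation drops into $[\oneh,1)$ for the first time; the other observation, having been halved from a value $\geqslant 1$ in the same step, is also $\geqslant \oneh$. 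Up to the symmetry of Observation~\ref{o2}, the configuration at that moment is either $(S_2^{\neq},S_2^{\neq},[\oneh,1),[\oneh,1))$ or $(S_2^{\neq},S_2^{\neq},[\oneh,1),[1,+\infty))$.

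I dispatch both configurations with one more round of case analysis. If both robots are activated, they meet at the midpoint. If only the robot whose observation lies in $[\oneh,1)$ is activated, it moves to the midpoint and transitions to $S_3$ (since $\oneh\leqslant dist<1$), yielding a configuration that fits the hypothesis of Lemma~\ref{l7} after applying Observation~\ref{o2} and the interval-shrinking observation. If only the other robot is activated, it moves to the midpoint without changing state, and the first robot's observation is further halved into $[\oneq,\oneh)$, producing a configuration that fits the hypothesis of Lemma~\ref{l8} up to the same two observations. The main obstacle is to justify the ``finite number of halvings'' step cleanly and to verify that none of these reductions accidentally crosses into the $dist<\oneh$ ``side switch detected'' branch, which would send a robot to its origin instead of the midpoint and disrupt the intended chain of reductions.
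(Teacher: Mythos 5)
Your proof is correct and follows essentially the same route as the paper's: dispatch the double-activation case by gathering at the midpoint, let single activations halve the distance until one observation first falls into $[\oneh,1)$, and then reduce to Lemmas~\ref{l7} and~\ref{l8} exactly as the paper does. The two worries you flag are non-issues: the observations halve from a fixed finite starting value, so only finitely many halvings keep both $\geqslant 1$; and throughout your reduction every robot still in $S_2^{\neq}$ observes a distance $\geqslant\oneh$ until control passes to Lemma~\ref{l8}, whose hypothesis (and proof) already covers the $dist<\oneh$ side-switch branch.
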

\begin{proof}
If both robots are activated, they compute the midpoint and they gather. If only one robot is activated at each cycle, configuration $(S_2^{\neq}, S_2^{\neq},$ $[1,+\infty), [1,+\infty))$ keeps repeating for finitely many cycles, until the distance observed by some robot, say $r$, becomes less than $1$. The configuration then becomes $(S_2^{\neq}, S_2^{\neq}, [\oneh,1), [\oneh,+\infty))$.

Once again, if both robots are activated at the next cycle, they gather in the midpoint. If only $r$ is activated, configuration $(S_3, S_2^{\neq}, [\oneq,\oneh), [\oneq,+\infty))$ is reached, and Lemma~\ref{l7} applies. On the other hand, if only $s$ is activated, two cases arise. If the distance observed by $s$ is less than $1$, configuration $(S_2^{\neq}, S_3, [\oneq,\oneh), [\oneq,\oneh))$ is reached, and Lemma~\ref{l7} applies again. If the distance is at least $1$, then configuration $(S_2^{\neq}, S_2^{\neq}, [\oneq,\oneh), [\oneh,+\infty))$ is reached, and Lemma~\ref{l8} applies.
\end{proof}

\begin{lemma}\label{l6}
$(S_1, S_{\rm finish}, [0,1], (1,+\infty)) \downarrow$ and $(S_1, S_{\rm start}, [0,1], [1,+\infty)) \downarrow$.
\end{lemma}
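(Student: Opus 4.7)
The plan is a direct case analysis on the activation schedule, exploiting two simple facts about the algorithm. First, robot $r$, being in state $S_1$ with $dist\leqslant 1$, transitions on its next activation to $S_{\rm finish}$ with destination $(0,0)$ and therefore does not move. Second, robot $s$, whether in $S_{\rm finish}$ with $dist>1$ (first configuration) or in $S_{\rm start}$ with $dist\geqslant 1$ (second configuration), sets its destination to $other.position$ and therefore moves directly onto $r$'s perceived location when activated. These two facts are what make both configurations essentially trivial; the only work is bookkeeping the schedule.

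With those facts in hand, I would split on the first-round schedule. If only $s$ activates, it moves onto $r$ and the two robots coincide. If only $r$ activates, no physical motion occurs, but the configuration evolves into $(S_{\rm finish}, S_{\rm finish}, [0,1], (1,+\infty))$ or $(S_{\rm finish}, S_{\rm start}, [0,1], [1,+\infty))$, respectively; from here, $r$'s subsequent activations continue to keep it in place via the $S_{\rm finish}$ branch with $dist\leqslant 1$, while by fairness $s$ must eventually be activated and, by the same observation as above, arrive at $r$'s position. If instead both robots activate simultaneously on a common snapshot, $r$ computes $(0,0)$ and stays while $s$ computes $other.position$ and moves onto $r$, so they coincide in that round as well.

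Finally, once the two robots occupy the same point, any further activation of either reads $dist=0$ and triggers the termination clause at the top of the algorithm, so neither robot ever moves again and \rend is solved. There is no genuine obstacle in this argument; the only minor subtlety is the degenerate case $dist=0$ inside the interval $[0,1]$ for $r$, but there the robots are already coincident and fairness together with the termination clause immediately close the argument.
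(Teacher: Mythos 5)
Your proof is correct and follows essentially the same route as the paper's: robot $r$'s only action is to (possibly) switch to $S_{\rm finish}$ while computing destination $(0,0)$, so it never moves, while robot $s$ computes $other.position$ and lands on $r$ at its first activation, after which the $dist=0$ termination clause freezes both robots. The only difference is that you spell out the schedule case analysis explicitly, which the paper compresses into two sentences.
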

\begin{proof}
Robot $r$ switches to $S_{\rm finish}$ as soon as it is activated, and keeps staying still. Robot $s$ moves to $r$ as soon as it is activated.
\end{proof}

\begin{lemma}\label{l3}
$(S_1, S_{\rm start}, \{1\}, [0,+\infty)) \downarrow$.
\end{lemma}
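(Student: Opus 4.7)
The plan is to proceed by case analysis on which robot is activated in the first round, using the key geometric observation that, since $r$ in state $S_1$ sees $s$ at exactly distance $1$, the physical separation equals one of $r$'s unit lengths, so $s$ necessarily observes $r$ at some fixed (but unknown) distance $d_s > 0$ determined by the ratio of the two units.

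First I would spell out each robot's single-activation behavior from the starting configuration. Robot $r$ in $S_1$ with $dist = 1 \leqslant 1$ transitions to $S_{\rm finish}$ and picks destination $(0,0)$, so it simply freezes. Robot $s$ in $S_{\rm start}$ either moves onto $r$'s position (if $d_s \geqslant 1$, yielding immediate rendezvous, since $r$ never moves in the same round) or transitions to $S_1$ and jumps to the unique point at distance $1$ from $r$ in $s$'s unit (if $d_s < 1$). The crucial geometric calculation is that this latter jump leaves $s$ observing exactly $1$ and $r$ observing $1/d_s > 1$.

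Combining these, after the first round the only non-terminal configurations that can arise are (i)~$(S_{\rm finish}, S_{\rm start}, \{1\}, [0, +\infty))$, when only $r$ activates; (ii)~$(S_{\rm finish}, S_1, \{1/d_s\}, \{1\})$ with $1/d_s > 1$, when both activate and $d_s < 1$; and (iii)~$(S_1, S_1, \{1/d_s\}, \{1\})$ with $1/d_s > 1$, when only $s$ activates and $d_s < 1$. I would then dispose of each in turn. For~(i), $r$ is frozen and fairness guarantees that $s$ eventually activates, reducing the situation either to~(ii) (if $d_s < 1$) or to immediate rendezvous (if $d_s \geqslant 1$). For~(ii), Observation~\ref{o2} rewrites the configuration as $(S_1, S_{\rm finish}, \{1\}, \{1/d_s\})$, which is subsumed by the first statement of Lemma~\ref{l6}. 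For~(iii), a short activation split suffices: every activation of $r$ sees $dist = 1/d_s > 1$ and moves $r$ onto $s$, while a lone activation of $s$ drops $s$ into $S_{\rm finish}$ in place, leaving a configuration in which $s$ will never move again and $r$'s next activation forces rendezvous by fairness.

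The step I expect to require the most care is the bookkeeping of the two distinct unit lengths along $s$'s first move: verifying that a single step of $s$ from $S_{\rm start}$ with $d_s < 1$ simultaneously places $s$ at distance $1$ from $r$ in $s$'s unit and at $1/d_s > 1$ in $r$'s unit is precisely what lines the residual configuration up for Lemma~\ref{l6} via Observation~\ref{o2}. Everything else reduces to a mechanical simulation of Algorithm~\ref{alg1}.
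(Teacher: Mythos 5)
Your proof is correct and follows essentially the same route as the paper's: a case split on which robot is activated first, the key computation that a move-away by $s$ from distance $d_s<1$ lands it at distance $1$ in its own unit and $1/d_s>1$ in $r$'s unit, and reduction of the resulting configurations to Lemma~\ref{l6} (via Observation~\ref{o2}) or to a direct chasing argument for the $(S_1,S_1)$ case. The only difference is organizational: the paper dispatches the $d_s\geqslant 1$ case wholesale to the second statement of Lemma~\ref{l6} before doing the activation analysis, whereas you handle it inline.
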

\begin{proof}
If the distance observed by robot $s$ is at least $1$, then Lemma~\ref{l6} applies. Otherwise the configuration is $(S_1, S_{\rm start}, \{1\}, [0,1))$. Three cases arise.
\begin{itemize}
\item If both robots are activated at the first cycle, they reach configuration $(S_{\rm finish}, S_1, (1,+\infty), \{1\})$, and Lemma~\ref{l6} applies.
\item If only robot $r$ is activated at the first cycle, configuration $(S_{\rm finish}, S_{\rm start},$ $\{1\}, [0,1))$ is reached. Now $r$ keeps staying still and in state $S_{\rm finish}$. As soon as $s$ is activated, configuration $(S_{\rm finish}, S_1, (1,+\infty), \{1\})$ is reached, and Lemma~\ref{l6} applies.
\item If only robot $s$ is activated at the first cycle, configuration $(S_1, S_1,$ $(1,+\infty), \{1\})$ is reached. From now on, $s$ keeps staying still (possibly switching to $S_{\rm finish}$), whereas $r$ moves to $s$ as soon as it is activated.
\end{itemize}
\end{proof}

\begin{lemma}\label{l1}
$(S_1, S_1, (1,+\infty), (1,+\infty)) \downarrow$.
\end{lemma}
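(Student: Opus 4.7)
The plan is to case-split on which subset of the two robots is activated in the first round. Both robots start in state $S_1$ with observed distance strictly greater than $1$, so the relevant branch of Algorithm~\ref{alg1} instructs any activated robot to set its state to $S_2^{dir}$ and move all the way to the other robot's position.

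First I would dispose of the cases in which only one robot is activated. By Observation~\ref{o2} it suffices to treat the case in which only $r$ moves; then $r$ jumps directly onto $s$ in a single step while $s$ stays put, so the two robots coincide at the end of the round. In every subsequent round, whichever robot is activated reads $dist = 0$ in lines 2--4 of Algorithm~\ref{alg1} and terminates without moving, so \rend is solved.

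The main case is when both robots are activated in the first round. They both execute the same instruction simultaneously and swap positions. The Euclidean distance between them is unchanged, and since each robot's unit distance is its own fixed scalar, both observed distances remain in $(1,+\infty)$. Moreover, each robot's local coordinate system is stable between activations, so after the swap each robot sees the other on the side opposite to the one it just memorized in its state; hence both robots enter $S_2^{\neq}$. The resulting configuration lies in $(S_2^{\neq}, S_2^{\neq}, (1,+\infty), (1,+\infty))$, which by monotonicity of $\downarrow$ in the distance intervals is covered by Lemma~\ref{l2}, and rendezvous follows.

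The only slightly delicate point is the side-switch claim in the simultaneous case: it relies on the per-robot coordinate system being fixed between cycles, so that ``left'' and ``right'' retain their meaning before and after the move. This is a standing assumption of the model here (the remark about class $\mathcal L$ is precisely that those algorithms would have to tolerate changes, whereas Algorithm~\ref{alg1} is not in $\mathcal L$). Once this is granted, the proof reduces to a short trace through the $S_1$ branch of Algorithm~\ref{alg1} combined with the already-established Lemma~\ref{l2} and Observation~\ref{o2}.
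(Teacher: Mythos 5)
Your proof is correct and follows essentially the same route as the paper's: the single-activation case gives immediate rendezvous, and the simultaneous case yields $(S_2^{\neq}, S_2^{\neq}, (1,+\infty), (1,+\infty))$, which is handled by Lemma~\ref{l2}. The extra care you take with the side-switch and the persistence of the local coordinate frame is sound but not a departure from the paper's argument.
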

\begin{proof}
If only one robot is activated, it moves to the other robot, and \rend is solved. If both robots are activated, they turn to $S_2^{\rm left}$ or $S_2^{\rm right}$ and switch positions. Hence they reach configuration $(S_2^{\neq}, S_2^{\neq}, (1,+\infty),$ $(1,+\infty))$, and Lemma~\ref{l2} applies.
\end{proof}

\begin{theorem}\label{t1}
In \ssynch, \rend of two \FS robots is solvable with six internal states. This result holds even without unit distance agreement.
\end{theorem}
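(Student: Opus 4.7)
The plan is to reduce the initial configuration $(S_{\rm start},S_{\rm start},[0,+\infty),[0,+\infty))$ to one of the configurations already shown to be $\downarrow$ by Lemmas~\ref{l1}--\ref{l9}, through a case analysis on what happens during the first activation round. By Observation~\ref{o2}, I may assume without loss of generality that, when exactly one robot is activated, it is $r$; and I may discard the trivial case in which the two robots coincide initially, since then Algorithm~\ref{alg1} terminates immediately at line~3. The case split is driven by the fact that $r$ and $s$ need not share a unit distance, so the observed distances $d_r$ and $d_s$ can differ; let $D>0$ denote the true initial distance and $u_r,u_s$ the two units, so that $d_r=D/u_r$ and $d_s=D/u_s$.

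First I would dispatch the single-activation subcases. If $r$ observes $d_r\geqslant 1$, it jumps onto $s$'s position; the next activation sees distance $0$ and terminates via line~3. If $d_r<1$, robot $r$ moves to a point at observed distance exactly $1$ from $s$ and enters $S_1$, producing $(S_1,S_{\rm start},\{1\},[0,+\infty))$, which is $\downarrow$ by Lemma~\ref{l3}.

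Next I would handle the double-activation subcases. When $d_r,d_s\geqslant 1$, the two robots swap positions; the true distance is preserved, so the new observed distances remain in $[1,+\infty)$, and on the following look each robot detects a side switch, yielding $(S_2^{\neq},S_2^{\neq},[1,+\infty),[1,+\infty))$, which is handled by Lemma~\ref{l2}. When $d_r,d_s<1$, both robots back away to observe unit distance; a short calculation on the new true distance $u_r+u_s-D$ (using $d_r,d_s<1$, i.e., $u_r,u_s>D$) shows that both new observed distances strictly exceed $1$, giving $(S_1,S_1,(1,+\infty),(1,+\infty))$, which is $\downarrow$ by Lemma~\ref{l1}.

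The main obstacle is the asymmetric subcase, in which (WLOG by Observation~\ref{o2}) $d_r<1\leqslant d_s$: robot $r$ backs away while $s$ leaps onto $r$'s old position. The subtle point will be that the two new locations of $r$ and $s$ still lie on the same ray from $s$'s new position as the old $r$-$s$ segment did, so the direction in which $s$ sees $r$ is preserved; hence $s$ enters $S_2^=$, while $r$ enters $S_1$ observing distance $1-d_r\in(0,1)$. The resulting configuration is not directly covered by the existing lemmas, so I would unroll one more round. In that round every activation pattern is favorable: if $s$ is activated, it moves onto $r$ (its state being $S_2^=$ with $dir=d$), so \rend follows via line~3 on the next observation; if only $r$ is activated, it transitions to $S_{\rm finish}$ without moving (since $dist\leqslant 1$), producing $(S_{\rm finish},S_2^=,[0,1],[0,+\infty))$, which is $\downarrow$ by Lemma~\ref{l9}. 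This closes the case analysis and yields the theorem.
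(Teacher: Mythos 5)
Your proposal is correct and follows essentially the same route as the paper's proof: a case analysis on the first round that reduces the initial configuration to Lemmas~\ref{l1}, \ref{l2}, \ref{l3} and \ref{l9} (plus immediate gathering), merely organized by activation pattern rather than by the pair of observed distances. The only substantive additions are the explicit computations that the paper leaves implicit --- that after a mutual back-off the new true distance $u_r+u_s-D$ yields observed distances strictly greater than $1$, and that in the asymmetric case the leaping robot still sees its partner on the same side, hence lands in $S_2^{=}$ --- both of which check out.
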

\begin{proof}
We prove that $(S_{\rm start}, S_{\rm start}, [0,+\infty), [0,+\infty)) \downarrow$. Three cases arise.
\begin{itemize}
\item Let the configuration be $(S_{\rm start}, S_{\rm start}, [0,1), [0,1))$. If only one robot is activated, say $r$, then configuration $(S_1, S_{\rm start}, \{1\}, [0,+\infty))$ is reached, and Lemma~\ref{l3} applies. If both robots are activated, configuration $(S_1, S_1, (1,+\infty), (1,+\infty))$ is reached, and Lemma~\ref{l1} applies.
\item Let the configuration be $(S_{\rm start}, S_{\rm start}, [1,+\infty), [0,1))$ (the symmetric case is equivalent, due to Observation~\ref{o2}). If only robot $r$ is activated, it moves to $s$ and \rend is solved. If only $s$ is activated, configuration $(S_{\rm start}, S_1, (1,+\infty), \{1\})$ is reached, and Lemma~\ref{l6} applies. Finally, if both robots are activated, configuration $(S_2^{=}, S_1, [0,+\infty), [0,1))$ is reached. Next, if only robot $s$ is activated, configuration $(S_2^{=}, S_{\rm finish},$ $[0,+\infty), [0,1))$ is reached, and Lemma~\ref{l9} applies. In any other case, $r$ moves to $s$ and \rend is solved.
\item Let the configuration be $(S_{\rm start}, S_{\rm start}, [1,+\infty), [1,+\infty))$. If only one robot is activated, it moves to the other robot, and \rend is solved. If both robots move, they switch positions, and the configuration becomes $(S_2^{\neq}, S_2^{\neq}, [1,+\infty), [1,+\infty))$. Then Lemma~\ref{l2} applies.
\end{itemize}
\end{proof}

\section{Finite-Communication Robots}
We now focus on \FC robots distinguishing the asynchronous and the semi-synchronous cases.

\subsection{Asynchronous}
It is not difficult  to see that algorithms in $\mathcal L$ are not sufficient to solve the problem.

\begin{theorem}\label{th:externalImp}
 In \asynch, \rend of two \FC robots is unsolvable  by algorithms in $\mathcal L$, regardless of the amount of colors employed.
\end{theorem}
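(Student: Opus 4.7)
The plan is to defeat every algorithm in $\mathcal L$ by exhibiting an adversarial \asynch schedule. I would begin by observing that in $\mathcal L$, each robot's destination in global coordinates is $(1-\lambda(c))\,p_{\text{self}} + \lambda(c)\,p_{\text{other}}$, where $c$ is the observed partner color; this point lies on the line through the two current positions and is independent of the local frame (including any rotation, reflection, or rescaling). Consequently both robots remain on the initial line for the entire execution, and the problem reduces to $\mathbb R$. WLOG the robots start at distinct points of this line and share the common initial color $c_0$; let $C$ be the (finite) set of colors and let $\mu: C\to C$ be the color-update function.

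My baseline adversarial schedule is perfect lockstep: both robots are activated simultaneously, with their \look, \compute, and \move phases synchronized; this is a valid \asynch schedule. Under lockstep both robots always share a common color, which after $k$ rounds is $c_k = \mu^k(c_0)$, and their pairwise distance multiplies by $|1-2\lambda(c_k)|$ per round. Since $C$ is finite, the sequence $(c_k)$ is eventually periodic. If no color on its eventual cycle satisfies $\lambda(c)=1/2$, the distance stays strictly positive forever, so the robots never coincide under this schedule and we are done.

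It remains to handle the case where $\lambda(c^*)=1/2$ for some $c^*$ reached in the sequence. The adversary then deviates just before the lockstep round that would merge the robots at color $c^*$: it activates only one robot, say $r$, for a full cycle, placing $r$ at the midpoint with color $\mu(c^*)$ while $s$ is unmoved with color $c^*$. The resulting configuration is asymmetric in both position and color. From this state I would have the adversary proceed by single-activating one robot at a time, always choosing the robot whose observed partner color $c$ satisfies $\lambda(c)\neq 1$; this is safe because only when $\lambda(c)=1$ does the acting robot land exactly on its partner (for any other value the robot either stays, slides partway, or jumps past, always remaining at a different point).

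The hardest part is ensuring the adversary can always find a safe activation compatible with fairness. The sole obstruction is a ``fatal'' color pair $(c_r,c_s)$ with $\lambda(c_r)=\lambda(c_s)=1$, in which any single activation causes rendezvous. I would handle this by running a lockstep round inside the fatal region itself: the two robots then swap positions (distance preserved, sides flipped) and the color pair updates to $(\mu(c_s),\mu(c_r))$. A pigeonhole argument on the finite automaton induced by this swap rule on $C\times C$ shows that the fatal region is either eventually exited --- allowing the single-activation schedule to resume --- or forms a closed swap-orbit on which distance is preserved forever and rendezvous is already impossible, handing the adversary an outright victory. Stitching these phases together yields an infinite non-rendezvous execution, proving the theorem.
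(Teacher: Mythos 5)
Your adversary only ever uses two kinds of rounds: lockstep rounds in which both robots complete a full cycle simultaneously, and rounds in which exactly one robot completes a full cycle. Every such schedule is a legal \ssynch schedule, so your argument, if it worked, would prove impossibility already in \ssynch. But Theorem~\ref{th4x} exhibits an algorithm in $\mathcal L$ (Algorithm~\ref{alg2x}: three colors with $\lambda(A)=1/2$, $\lambda(B)=0$, $\lambda(C)=1$ and color update $A\to B\to C\to A$) that solves \rend in \ssynch, so no semi-synchronous adversary can succeed against it. Concretely, run your schedule against it from the common initial color $A$: the first lockstep round is a merge, so you single-activate $r$, reaching the color pair $(c_r,c_s)=(B,A)$; the moment fairness forces you to activate $s$, the pair becomes $(B,C)$ and is a trap: activating $s$ alone leaves the pair and the positions unchanged, activating $r$ alone makes $r$ (which sees $C$ with $\lambda(C)=1$) land on the stationary $s$, and simultaneous activation does the same. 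Your ``fatal region'' analysis does not catch this, because it only covers pairs with $\lambda(c_r)=\lambda(c_s)=1$; here the obstruction is a half-fatal pair in which only one robot is ever safe to single-activate, so your schedule must either starve $r$ forever (violating fairness) or gather the robots.

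The missing idea is that the theorem is genuinely about asynchrony: the adversary must decouple a robot's \look from its color update and its \move. In the paper's proof, when both robots are about to compute the midpoint, the adversary lets both take their snapshots and then freezes $s$ immediately afterwards, so that $s$ still displays the old color $A$ and has not yet moved, while $r$ completes two full cycles, each time seeing the stale color $A$ and hence moving to a new midpoint without ever landing on $s$; only then is $s$ allowed to update its color and travel to the midpoint it originally computed. This leaves the robots at distinct points with the \emph{same} color $\mu(A)$, so lockstep can resume and the pattern repeats forever. No such color-resynchronizing step exists in \ssynch, which is exactly why your construction cannot be repaired without invoking stale observations.
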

\begin{proof}
For each robot, the destination point and the next state are a function of the state of the other robot only. Assuming that both robots start in the same state, we let them perform their execution synchronously. As soon as both robots compute the midpoint $m$ as a result of seeing each other in state $A$, we let only robot $r$ complete its cycle. Meanwhile, $s$ has computed $m$ but still has not updated its state, nor moved. Therefore, $r$ keeps seeing $s$ set to $A$, and computes the new midpoint without changing its own state. We let $r$ complete another cycle, and then we let $s$ update its state and reach $m$. As a result, both robots are back in the same state and have not gathered. By repeating this pattern, the robots never solve \rend.
\end{proof}

We now describe an  algorithm  (which is not  in $\mathcal L$) that solves the problem.
Also this algorithm uses
 the local unit distance as a computational tool, but in a rather different way since a robot  cannot remember and  has to infer information by observing the other robot's light.
 
Intuitively, the two robots try to reach a configuration in which both robots
  see each other at distance lower than $1$. To do so, they first
communicate to the  other whether or not the distance they observe is
smaller than $1$ (recall that they may disagree, because their unit
distances may differ). If one robot acknowledges that its partner has
observed a distance not smaller  than $1$, it reduces the distance by
moving toward the midpoint.

The process goes on until both robots observe a distance smaller than
$1$. At this point, if they have not gathered yet, they try to compare
their distance functions, in order to break symmetry. They move away
from each other in such a way that their final distance is the sum of
their respective unit distances.
Before proceeding, they attempt to switch positions. If, due to
asynchrony, they failed to be in the same state at any time before
this step, they end up gathering. Instead, if their execution has been
synchronous up to this point, they finally switch positions.
Now, if the robots have not gathered yet, they know that their
distance is actually the sum of their unit distances. Because each
robot knows its own unit, they can tell if one of them is larger. If a
robot has a smaller unit, it moves toward its partner, which waits.

Otherwise, if their units are equal, they apply a simple protocol: as
soon as a robot wakes up, it moves toward the midpoint and orders its
partner to stay still. If both robots do so, they gather in the
middle. If one robot is delayed due to asynchrony, it acknowledges the
order to stay still and tells the other robot to come.

\begin{algorithm}
\caption{\ Rendezvous for rigid \asynch with no unit distance agreement and 12 externally visible states}\label{alg3x}
\begin{algorithmic}[1]
\State $dist \gets \lVert other.position\rVert$
\If{$other.state =$ (\textsc{Test})} \Comment{testing distances}
	\If{$dist \geqslant 1$}
		\State $me.state \gets$ (\textsc{Me} $\geqslant 1$)
	\Else
		\State $me.state \gets$ (\textsc{Me} $<1$)
	\EndIf
\ElsIf{$other.state =$ (\textsc{Me} $\geqslant 1$)} \Comment{reducing distances}
	\State $me.state \gets$ (\textsc{Approaching})
	\State $me.destination \gets other.position/2$
\ElsIf{$other.state =$ (\textsc{Approaching})} \Comment{test distances again}
	\State $me.state \gets$ (\textsc{Test})
\ElsIf{$other.state =$ (\textsc{Me} $<1$)}
	\If{$dist \geqslant 1$}
		\State $me.state \gets$ (\textsc{Me} $\geqslant 1$)
	\Else
		\State $me.state \gets$ (\textsc{Both} $<1$)
	\EndIf
\ElsIf{$other.state =$ (\textsc{Both} $<1$)}
	\If{$dist = 0$} \Comment{we have gathered}
		\State $me.state \gets$ (\textsc{Halted})
	\Else
		\State $me.state \gets$ (\textsc{Moving Away})
		\If{$dist <1$} \Comment{moving away by $1-dist/2$}
			\State $me.destination \gets other.position\cdot (\oneh-1/dist)$
		\EndIf
	\EndIf
\ElsIf{$other.state =$ (\textsc{Moving Away})}
	\State $me.state \gets$ (\textsc{You Moved})
\ElsIf{$other.state =$ (\textsc{You Moved})}
	\State $me.state \gets$ (\textsc{Coming})
	\State $me.destination \gets other.position$
\ElsIf{$other.state =$ (\textsc{Coming})}
	\State $me.state \gets$ (\textsc{Waiting})
\ElsIf{$other.state =$ (\textsc{Waiting})}
	\If{$dist > 2$} \Comment{my unit is smaller}
		\State $me.state \gets$ (\textsc{Stay})
		\State $me.destination \gets other.position$
	\ElsIf{$dist = 2$} \Comment{our units are equal}
		\State $me.state \gets$ (\textsc{Both} $=2$)
	\Else \Comment{my unit is bigger or we have gathered}
		\State $me.state \gets$ (\textsc{Halted})
	\EndIf
\ElsIf{$other.state =$ (\textsc{Both} $=2$)}
	\State $me.state \gets$ (\textsc{Stay})
	\If{$dist = 2$} \Comment{moving to the midpoint}
		\State $me.destination \gets other.position/2$
	\EndIf
\ElsIf{$other.state =$ (\textsc{Stay})}
	\State $me.state \gets$ (\textsc{Halted})
\algstore{myalg2}
\end{algorithmic}
\end{algorithm}

\begin{algorithm}
\begin{algorithmic}[1]
\algrestore{myalg2}
\Else \Comment{$other.state =$ (\textsc{Halted})}
	\If{$dist = 0$} \Comment{we have gathered}
		\State $me.state \gets$ (\textsc{Halted})
		\State terminate
	\Else \Comment{maintain position while I come}
		\State $me.state \gets$ (\textsc{Stay})
		\State $me.destination \gets other.position$
	\EndIf
\EndIf
\end{algorithmic}
\end{algorithm}

\begin{figure}[tbh]
\centering{\includegraphics[width=11cm]{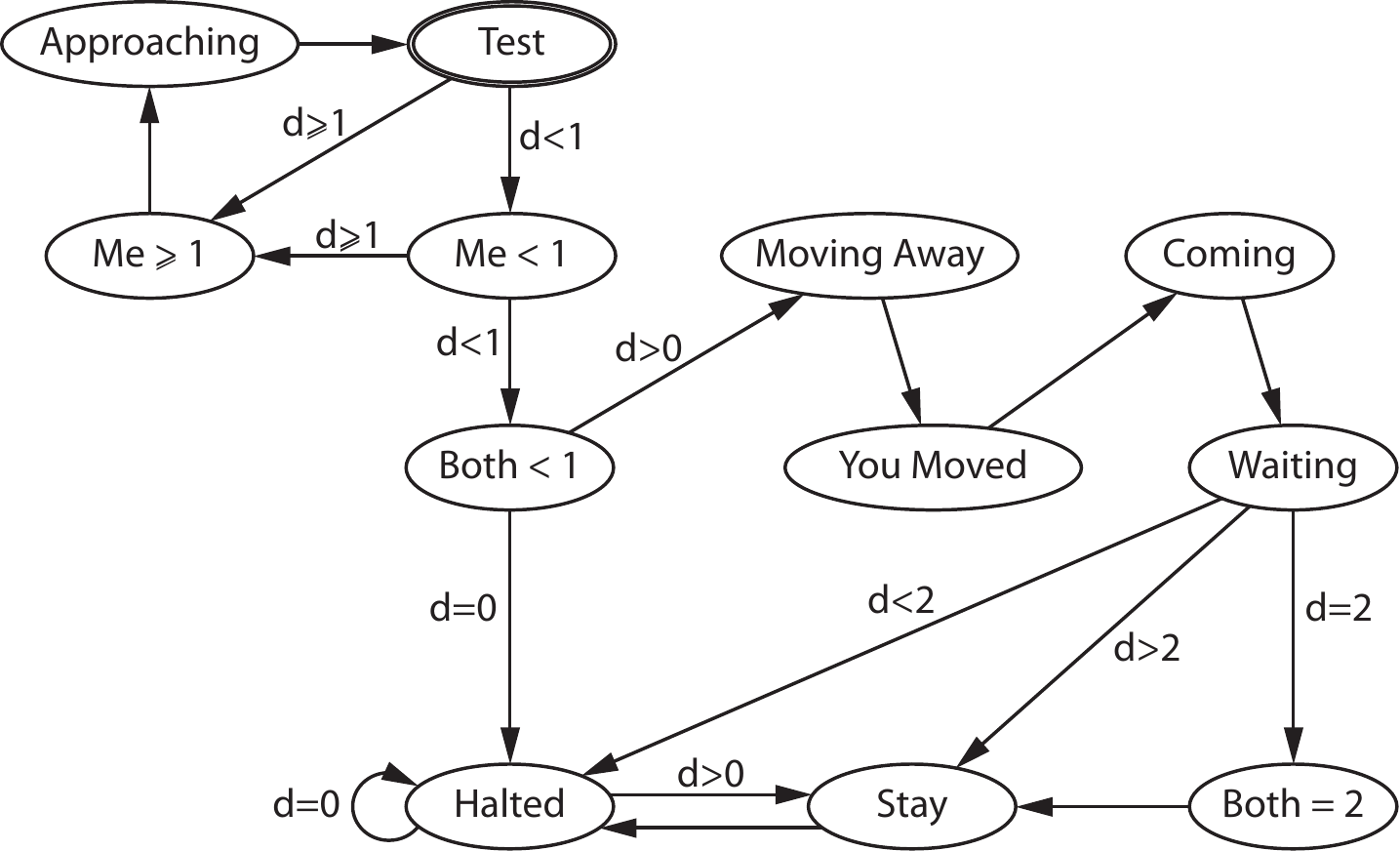}}
\caption{State transitions in Algorithm~\ref{alg3x}.}
\label{f1x}
\end{figure}

\begin{theorem}\label{th6x}
In  \asynch, \rend of two \FC robots is solvable with 12 colors. This result holds 
even without unit distance agreement.
\end{theorem}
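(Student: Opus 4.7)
The plan is to mirror the structure of the proof of Theorem~\ref{t1}: I would work backward from the terminal \textsc{Halted} configuration through the successive phases induced by the state graph of Figure~\ref{f1x}, chaining a sequence of lemmas of the form ``from every reachable configuration in phase $k$, \rend is eventually solved.'' Unlike the \ssynch case, however, I must contend with the fact that in \asynch a robot may be observed mid-move, and its visible color can already reflect a move that is not yet complete. Accordingly, I would phrase each intermediate configuration as a tuple specifying the pair of externally visible colors, the physical distance (together with each robot's own unit), and which of the two robots, if any, is currently between its \compute and the end of its \move.

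First I would dispose of the terminal phase: once a robot enters \textsc{Halted}, the last branch of the algorithm forces it to move only toward its partner via \textsc{Stay}, and to terminate once they coincide, so any configuration containing one \textsc{Halted} robot funnels to \rend within one further cycle of the other. I would then establish the key invariant driving the endgame: whenever both robots reach \textsc{Waiting} with no move pending, the physical distance between them equals exactly $u_r+u_s$, where $u_r,u_s$ are their respective unit lengths. Under this invariant, each observed distance compared against $2$ is a faithful three-way comparison of $u_r$ with $u_s$, so the \textsc{Stay}/\textsc{Both}$=2$ branches correctly drive the robots to a meeting point (either the larger-unit robot stays put while the other comes, or both apply the symmetric midpoint-or-stay handshake).

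The main obstacle is showing that the pipeline \textsc{Both}$<1$, \textsc{Moving Away}, \textsc{You Moved}, \textsc{Coming}, \textsc{Waiting} really does establish the $u_r+u_s$ invariant. This pipeline is designed so that under perfect synchrony it cleanly produces a swap of positions at distance $u_r+u_s$, while any asynchrony observed along the way yields a configuration with mismatched colors. For each such mismatched pair I would verify that the algorithm funnels the execution into the \textsc{Halted}/\textsc{Stay} endgame, exploiting the fact that the robot whose color is ``ahead'' in the pipeline can always be instructed to stay put while the lagging one moves onto it. The case analysis enumerates roughly a dozen mismatched color pairs and is the hardest part of the argument; special care is needed because a looking robot may observe its partner strictly between its previous and computed positions, so distance thresholds must be chosen with enough margin to survive this ambiguity.

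Finally I would show that the initial phase (\textsc{Test}, \textsc{Me}$\gtrless 1$, \textsc{Approaching}) always reaches $(\textsc{Both}<1, \textsc{Both}<1)$: each \textsc{Approaching} step at least halves the physical distance, so after finitely many iterations both robots measure it below $1$ in their own units; the subsequent \textsc{Me}$<1$/\textsc{Both}$<1$ handshake then puts the execution into the pipeline analyzed above. Chaining all the lemmas yields \rend from the initial configuration $(\textsc{Test},\textsc{Test})$, and hence the theorem.
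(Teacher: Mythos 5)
Your plan is correct and follows essentially the same route as the paper's proof: the same phase decomposition of Algorithm~\ref{alg3x} (distance-halving via \textsc{Test}/\textsc{Approaching}, the \textsc{Both}$<1$ pipeline establishing that the post-swap distance equals the sum of the two unit lengths, the three-way comparison against $2$, and the \textsc{Stay}/\textsc{Halted} endgame absorbing every asynchrony-induced color mismatch). The only difference is organizational --- you chain lemmas backward from \textsc{Halted} in the style of Theorem~\ref{t1}, whereas the paper argues forward through the phases --- and both treatments leave the mismatched-color case analysis at the same level of detail.
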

\begin{proof}
We show that Algorithm~\ref{alg3x}, also depicted in Figure~\ref{f1x}, correctly solves \rend. Both robots start in state (\textsc{Test}), and then update their state to (\textsc{Me} $\geqslant 1$) or (\textsc{Me} $< 1$), depending if they see each other at distance greater or lower than $1$ (they may disagree, because their distance functions may be different).

If robot $r$ sees robot $s$ set to (\textsc{Me} $\geqslant 1$), it starts approaching it by moving to the midpoint, in order to reduce the distance. No matter if $r$ approaches $s$ several times before $s$ is activated, or both robots approach each other at different times, one of them eventually sees the other set to (\textsc{Approaching}). When this happens, their distance has reduced by at least a half, and at least one robot turns (\textsc{Test}) again, thus repeating the test on the distances.

At some point, both robots see each other at distance lower than $1$ during a test, and at least one of them turns (\textsc{Both} $<1$). If they have not gathered yet, they attempt to break symmetry by comparing their distance functions. To do so, when a robot sees the other set to (\textsc{Both} $<1$), it turns (\textsc{Moving Away}) and moves away by its own unit distance minus half their current distance. This move will be performed at most once by each robot, because if one robot sees the other robot still set to (\textsc{Both} $<1$), but it observes a distance not lower than $1$, then it knows that it has already moved away, and has to wait.

When a robot sees its partner set to (\textsc{Moving Away}), it shares this information by turning (\textsc{You Moved}). If only one robot turns (\textsc{You Moved}), while the other is still set to (\textsc{Moving Away}), then the second robot turns (\textsc{Coming}) and reaches the other robot, which just turns (\textsc{Waiting}) and stays still until they gather.

Otherwise, if both robots see each other set to (\textsc{You Moved}), they both turn (\textsc{Coming}) and switch positions. At least one of them then turns (\textsc{Waiting}). Now, if a robot sees its partner set to (\textsc{Waiting}) and they have not gathered yet, it knows that their current distance is the sum of their unit distances. If such distance is greater than $2$, then the robot knows that its partner's unit distance is bigger, and it moves toward it, while ordering it to stay still. Vice versa, if the distance observed is smaller than $2$, the observing robot stays still and orders its partner to come.

Finally, if the distance observed is exactly $2$, the observing robot knows that the two distance functions are equal, and turns (\textsc{Both} $=2$). In this case, a simple protocol allows them to meet. If a robot sees the other set to (\textsc{Both} $=2$) at distance $2$, it turns (\textsc{Stay}) and moves to the midpoint. If both robots do so, they eventually gather. Indeed, even if the first robot reaches the midpoint while the other is still set to (\textsc{Both} $=2$), it now sees its partner at distance $1$, and knows that it has to wait. On the other hand, whenever a robot sees its partner set to (\textsc{Stay}), it turns (\textsc{Halted}), which tells its partner to reach it. This guarantees gathering even if only one robot attempts to move to thee midpoint.
\end{proof}

\subsection{Semi-synchronous}
In \ssynch the situation is radically different from the \asynch case. In fact,   it is possible to
find a simple solution  in  $\mathcal L$ that uses the minimum number of colors possible, and operates
correctly  without unit distance agreement, starting from any arbitrary color configuration, and with interruptable movements  (see Algorithm \ref{alg2x}  and Figure \ref{f3x}). 

\begin{algorithm}
\caption{\ Rendezvous for non-rigid \ssynch with three externally visible states}\label{alg2x}
\begin{algorithmic}[1]
\If{$other.state = A$}
	\State $me.state \gets B$
	\State $me.destination \gets other.position/2$
\ElsIf{$other.state = B$}
	\State $me.state \gets C$
\Else \Comment{$other.state = C$}
	\State $me.state \gets A$
	\State $me.destination \gets other.position$
\EndIf
\end{algorithmic}
\end{algorithm}

\begin{figure}[ht]
\centering{\includegraphics[width=4cm]{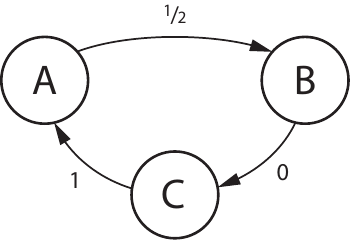}}
\caption{State transitions in Algorithm~\ref{alg2x}.}
\label{f3x}
\end{figure}

\begin{theorem}\label{th4x}
In  \ssynch, \rend  of two \FC robots is solvable by an algorithm in $\mathcal L$  with only three distinct colors.
This result holds even if  starting from an arbitrary color configuration, without unit distance agreement, and with   non-rigid movements. 
\end{theorem}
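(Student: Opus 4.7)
My plan is to verify Algorithm~\ref{alg2x} by a case analysis on the pair of initial colors $(X_r, X_s) \in \{A,B,C\}^2$, tracking its evolution under every fair \ssynch{} activation schedule with non-rigid movement. Two structural observations drive the analysis: a robot's next color depends only on its partner's current color, via the cycle $A \to B \to C \to A$; and the action is ``stay'' on seeing $B$, ``move to midpoint'' on seeing $A$, and ``move to other'' on seeing $C$. In every case where a move is triggered, non-rigidity guarantees progress of at least $\delta$ or actual arrival at the destination.

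I would focus first on the two ``near-catching'' configurations $(C, B)$ and $(B, C)$. In $(C, B)$, robot $r$ does not move so long as $s$ remains in $B$; whenever $s$ is activated it sees $C$, turns to $A$, and moves toward $r$ by at least $\delta$ (or reaches it outright). After such an activation the configuration is $(C, A)$ with strictly smaller distance; from $(C, A)$, further $s$-activations keep closing the gap, while any $r$-activation yields $(B, A)$ and then, after the next $s$-activation, the mirror catching case $(B, C)$. Since \ssynch{} is fair and each triggered move shrinks the distance by at least $\delta$ (or gathers outright), both catching configurations guarantee \rend{}.

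The next step is to show that every other starting configuration reaches a catching configuration (or gathers) in finite time. From $(B, B)$, asymmetric activation produces $(C, B)$ or $(B, C)$ while simultaneous activation yields $(C, C)$. From $(C, C)$, any single-robot activation moves the active robot to the partner (gathering outright, or shrinking the gap by $\delta$ and yielding $(A, C)$ or $(C, A)$), while simultaneous activation either swaps positions with distance reduced by at least $2\delta$ (returning to $(A, A)$) or gathers when $d < 2\delta$. From $(A, A)$, simultaneous activation gathers when $d < 2\delta$ or yields $(B, B)$ with distance reduced by at least $2\delta$; single-robot activation yields $(A, B)$ or $(B, A)$, each of which collapses onto a catching configuration as soon as the robot in state $A$ is activated, which fairness guarantees. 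Symmetric arguments handle the remaining mixed configurations.

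The principal obstacle is ruling out infinite non-progress under the combination of adversarial activation and non-rigid halting: the adversary may stop a moving robot after exactly $\delta$ and may starve either robot for arbitrarily long finite stretches. The resolution is that every move issued by the algorithm contracts the inter-robot distance by at least $\delta$ until the destination is within $\delta$, at which point it is actually reached. Combined with \ssynch{} fairness and the finiteness of the color space, this forces the distance to fall below $\delta$ within finitely many rounds, and subsequently an activation that produces \rend{}.
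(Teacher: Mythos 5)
Your overall strategy --- a case analysis on the nine color pairs, identifying $(B,C)$/$(C,B)$ as the ``catching'' configurations, and combining fairness with $\delta$-progress and the non-increase of the inter-robot distance --- is essentially the paper's argument (the paper merely packages it as ``lock-step symmetric execution'' versus ``states differ at some point, and then differ forever''). However, one step of your symmetric-case analysis is wrong as stated: from $(C,C)$ under simultaneous activation, each robot computes the \emph{other robot's position} as its destination, so if both moves run to completion the robots exactly exchange positions and the distance is \emph{unchanged}, not ``reduced by at least $2\delta$''; likewise, when $d<2\delta$ (even $d<\delta$) they do not gather from $(C,C)$ --- they still just swap. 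The same error infects your closing claim that ``every move issued by the algorithm contracts the inter-robot distance by at least $\delta$'': the simultaneous swap is precisely the move that contracts nothing. This does not sink the proof, because you also correctly identify the true source of progress in the lock-step cycle, namely the $(A,A)$ phase where both robots head for the midpoint and the distance drops by at least $2\delta$ (or they gather outright when $d<2\delta$, since each destination is then within $\delta$); the $(C,C)$ phase only needs the weaker property that the distance does not increase, which holds since both truncated moves stay on the segment joining the robots.

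Two smaller points you should make explicit. First, the non-increase of the distance must be checked for \emph{all} simultaneous activations in which both robots move (e.g.\ $(A,C)$, where one robot heads for the midpoint and the other for its partner's old position); a short computation on the common segment shows the new distance is at most $\max(d/2,\,d-2\delta)$, so this is fine but not free. Second, midpoint moves alone never produce \rend (they only halve the distance), so the final gathering step must be a ``move to the other robot'' issued from a catching configuration once $d\leqslant\delta$; you need fairness to guarantee that such configurations recur infinitely often in the asymmetric regime, which your traversal of the six-cycle $(A,B)\to(C,B)\to(C,A)\to(B,A)\to(B,C)\to(A,C)$ does establish. With the $(C,C)$ claim corrected and these points spelled out, the proof is sound.
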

\begin{proof}
We show that Algorithm~\ref{alg2x} (see also Figure \ref{f3x}) correctly solves \rend from any initial configuration. Assume first that both robots start in the same state and both are activated at each turn. Then they keep having equal states, and they cycle through states $A$, $B$, and $C$ forever. Every time they are both set to $A$, they move toward the midpoint and their distance reduces by at least $2\delta$, until it becomes so small that they actually gather.

Otherwise, if at some point the two robots are in different states, they will keep staying in different states forever. In this case their distance will never increase, and they will periodically be found in states $B$ and $C$, respectively. Whenever this happens, the robot set to $C$ retains its state and stays still until the other robot is activated and moves toward it by at least $\delta$. As soon as their distance becomes not greater than $\delta$ and they turn again $B$ and $C$, they finally gather.
\end{proof}

Note that the number of colors used by the algorithm  is optimal. This follows as a corollary of the impossibility result 
when lights are visible to both robots:

\begin{lemma}\cite{viglietta}
In  \ssynch, \rend of two robots with persistent memory visible by both of them is unsolvable by algorithms in $\mathcal L$ that use only two colors.
\end{lemma}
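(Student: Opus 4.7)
The plan is to argue by contradiction: suppose some algorithm in $\mathcal{L}$ with two colors $\{A,B\}$ solves \rend in \ssynch, and build an adversarial schedule that prevents gathering. Any such algorithm is encoded by two functions $\lambda:\{A,B\}^2\to\mathbb{R}$ (the movement coefficient) and $f:\{A,B\}^2\to\{A,B\}$ (the next color), each a function of the pair (own color, other's color) visible to the acting robot.

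The key algebraic observation I would establish first is that, because in $\mathcal{L}$ the destination is always of the form $\lambda\cdot other.position$, the private coordinate frames of the two robots cancel out when computing the actual global displacement. Consequently, when both robots share a color $c$ and are activated simultaneously, they coincide after the move if and only if $\lambda(c,c)=1/2$; in every other case they stay ($\lambda=0$), swap ($\lambda=1$), or reach distinct symmetric points, and they end the round in the common color $f(c,c)$, preserving the \qq{same color} invariant. The analogous gathering criteria in an asymmetric pair $(c_r,c_s)$ are $\lambda(c_r,c_s)=1$ for activating $r$ alone, $\lambda(c_s,c_r)=1$ for activating $s$ alone, and $\lambda(c_r,c_s)+\lambda(c_s,c_r)=1$ for simultaneous activation; these three equalities cannot hold at once.

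With both robots started in color $A$, I would then case-split on the orbit of $A$ under the map $c\mapsto f(c,c)$, which is eventually periodic on at most two elements. If this orbit never meets a color $c$ with $\lambda(c,c)=1/2$, the adversary activates both robots every round, the \qq{same color} invariant is maintained forever, and \rend is never solved. Otherwise the orbit reaches some $c^*$ with $\lambda(c^*,c^*)=1/2$, so the adversary must activate only one robot, say $r$, to dodge the midpoint gathering. If $f(c^*,c^*)=c^*$, repeating this one-sided activation forever keeps both robots in color $c^*$ while halving the distance each round, so \rend is still never achieved in finite time.

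The main obstacle is the remaining subcase $f(c^*,c^*)\ne c^*$, which launches the system into an asymmetric configuration $(c',c^*)$ at half the previous distance. Here my plan is to walk the finite transition graph over the at most four global color configurations, at each step choosing among \qq{activate $r$}, \qq{activate $s$}, and \qq{activate both} a move that avoids gathering; the algebraic impossibility noted above guarantees that such a move always exists. A careful case analysis of the resulting transitions then shows that each safe move either shrinks the distance by a definite factor, or restores a symmetric configuration (at strictly smaller distance) to which the previous case applies. Combining these contraction mechanisms with the finiteness of the color state space, I would conclude that the adversary can sustain an infinite non-gathering schedule, contradicting the assumed correctness of the algorithm.
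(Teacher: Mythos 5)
Note first that the paper contains no proof of this lemma --- it is imported wholesale from the cited reference --- so your argument has to stand on its own. Its first two paragraphs do: the observation that private frames cancel so that the global destination is $p_r+\lambda\,(p_s-p_r)$, the gathering criteria $\lambda(c,c)=1/2$ (simultaneous, symmetric), $\lambda(c_r,c_s)=1$, $\lambda(c_s,c_r)=1$, $\lambda(c_r,c_s)+\lambda(c_s,c_r)=1$, and the treatment of the symmetric orbit are all correct. One slip there: \qq{repeating this one-sided activation forever} is not a legal \ssynch schedule, because the scheduler must be fair; you need to alternate the two single activations, which works just as well since each keeps the configuration symmetric and merely halves the distance.

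The genuine gap is the final paragraph, and it is not one that the promised \qq{careful case analysis} can close in the form you describe. Your plan is: at every configuration at least one of the three activation choices avoids immediate gathering, hence the adversary survives forever. But \qq{a locally safe move always exists} does not produce an infinite \emph{fair} non-gathering schedule. Concretely, in an asymmetric configuration $(X,Y)$ with $\lambda(X,Y)=0$, $\lambda(Y,X)=1$ and $f(X,Y)=X$, the only safe move is to activate the $X$-robot alone, which changes nothing; fairness eventually forces the $Y$-robot to act, and then the robots gather. Worse, under the formalization you wrote down ($\lambda$ and $f$ functions of the \emph{ordered pair} of colors), the statement appears to be false outright: take $\lambda(A,A)=1/2$, $f(A,A)=B$, $\lambda(B,A)=0$, $f(B,A)=B$, $\lambda(A,B)=1$. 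From $(A,A)$, simultaneous activation gathers at the midpoint, while a single activation lands in the trap configuration $(B,A)$ just described, so \emph{every} fair \ssynch execution gathers, and once gathered the robots never separate because every destination is a multiple of $other.position=(0,0)$. So before any case walk can succeed you must pin down what $\mathcal L$ means when both lights are visible --- in particular whether $\lambda$ may depend on the robot's own color --- since the impossibility simply does not hold for the class as you have formalized it. That definitional point, plus the fairness constraint, is where the entire content of the lemma lives, and your proposal defers both to an unexecuted case analysis.
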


\section{Movements: Knowledge vs.\ Rigidity}
In this section,  we consider the {\em Rendezvous} problem when the movement of the robots can be 
interrupted by an adversary; previously,  unless otherwise stated, we have considered rigid movements, i.e., in each cycle  a robot  reaches  its computed destination point.
Now, the only constraint on the adversary is that a robot, if
interrupted before reaching its destination, moves by at least $\delta>0$
(otherwise, rendezvous is clearly impossible).
We  prove that, for rendezvous with lights,  knowledge of $\delta$ has the same power as rigidity of the movements.
 Note that knowing $\delta$ implies also that the robots can agree on a unit distance.
 
\subsection{\FS Robots}
\begin{theorem}\label{th3x}
In non-rigid \ssynch,  \rend of two \FS robots with  knowledge of $\delta$  is solvable with three colors.
\end{theorem}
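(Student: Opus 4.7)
The plan is to exhibit an explicit three-state protocol for \FS robots in non-rigid \ssynch and to verify its correctness case by case, in the same style as the lemma chain that supports Algorithm~\ref{alg1}. Since both robots know the same $\delta$, they share a common unit of length, so the protocol can branch directly on whether the observed distance $d$ is above or below $\delta$ (or a small multiple of it). This eliminates the four ``unit-matching'' states ($S_1$, $S_3$, $S_2^{\rm left}$, $S_2^{\rm right}$) that Algorithm~\ref{alg1} had to dedicate to inferring the partner's unit distance, and is exactly what reduces the state count from six to three.

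I would use three internal states $A$, $B$, $C$ arranged in a short cycle. In $A$ a robot makes progress toward the partner: it moves to the midpoint when $d>\delta$, and it moves directly onto the partner when $d\le\delta$ (the latter move is guaranteed to complete by the $\delta$-rule). State $B$ is entered after a simultaneous ``move onto partner'' that resulted in a rigid swap — the one pathological case in which two $A$-activations at $d\le\delta$ fail to gather — and it is in $B$ that the robot performs one further midpoint move; crucially this move is guaranteed to complete by the $\delta$-rule, because the midpoint of the swapped configuration is still at distance $d/2<\delta$ from each robot. State $C$ is the terminal sink entered upon observing coincidence, ensuring that once gathered the two robots never separate.

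The correctness proof would follow the $\downarrow$-notation and proceed through a short chain of lemmas mirroring Lemmas~\ref{l4}--\ref{l1}. In order I would prove: (i) $(C,C,\{0\},\{0\})\downarrow$ (trivially stable); (ii) $(B,B,[0,\delta],[0,\delta])\downarrow$ (a joint $B$-activation reaches the midpoint by the $\delta$-rule and gathers, while a solo $B$-activation funnels the system into a configuration handled by the previous lemma); (iii) $(A,A,[0,\delta],[0,\delta])\downarrow$ (a solo $A$-activation gathers in one step by the $\delta$-rule, a joint $A$-activation produces a rigid swap that lands in case (ii)); and finally (iv) the initial configuration $(A,A,[0,+\infty),[0,+\infty))\downarrow$, since every activation in $A$ at $d>\delta$ either reaches the midpoint when $d\le 2\delta$ (symmetric gather, or reduction to case (iii) in the solo case) or shrinks the inter-robot distance by at least $\delta$, so after finitely many activations the configuration falls into case (iii).

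The main obstacle is to rule out the Zeno-like pathologies arising from the combined power of non-rigidity (the adversary can stop every move at exactly $\delta$) and semi-synchrony (the adversary can arbitrarily choose which robot to activate). The crucial quantitative lever, enabled by knowledge of $\delta$, is the $\delta$-rule: any move whose destination is at distance at most $\delta$ from the robot is \emph{actually} completed. This is what lets the $A$-branch dispose of a short-distance solo activation in a single step (preventing infinite halving), and what lets the $B$-branch collapse the post-swap configuration to a guaranteed midpoint gather. Without this lever the protocol would be trapped in infinite halving, exactly as happens for oblivious robots in \ssynch; with it, three states suffice to break every surviving symmetry in finitely many rounds.
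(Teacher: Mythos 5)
Your overall architecture --- use state $A$ to funnel the robots to within distance $\delta$ of each other, have a short-range ``move onto the partner'' whose only failure mode is a simultaneous rigid swap, and use the remaining states to resolve that swap --- is the same as the paper's Algorithm~\ref{alg1x}. The gap is in the resolution itself, i.e.\ in your step (ii). You claim that from $(B,B,[0,\delta],[0,\delta])$ a solo $B$-activation ``funnels the system into a configuration handled by the previous lemma,'' but the previous lemma is $(C,C,\{0\},\{0\})$, and a solo midpoint move from a positive distance $d$ leaves the robots at distance $d/2>0$. With $C$ defined as a terminal sink entered upon observing coincidence, there are only two ways to continue, and both fail: if the mover transitions out of $B$ into the sink, then after the second robot performs its own midpoint move both robots are frozen in $C$ at distance $d/4>0$; if instead the mover stays in $B$, the adversary activates the robots alternately and the distance halves forever (the $\delta$-rule makes every such move complete, but completion of midpoint moves never produces coincidence). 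This is exactly the \ssynch midpoint pathology your protocol was supposed to eliminate, and three unguarded states do not eliminate it.

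The paper's construction closes this hole with two ingredients your sketch lacks. First, the midpoint move in state $B$ is guarded by a distance window ($\delta/2\leqslant dist<\delta$): after a solo activation the left-behind robot observes distance $d/2<\delta/2$, so the guard fails and it freezes in $B$. Second, the robot that did move transitions to $C$, which is not a sink but a \emph{chasing} state whose destination is the partner's position; since the remaining distance is below $\delta/2$, every chase move is rigid and the chaser catches the frozen target. The entire purpose of the third state is to create this asymmetry --- exactly one chaser and one stationary target --- after a solo activation; spending $C$ on ``we have already gathered'' (which can be tested directly via $dist=0$) leaves you with no state in which to encode it. Your steps (i), (iii) and (iv) are essentially sound, but without repairing (ii) along these lines the proposed protocol does not solve \rend.
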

\begin{proof}
We show that Algorithm~\ref{alg1x} correctly solves \rend. Both robots start in state $A$. If a robot sees its partner at distance lower than $\delta/2$, it moves in the opposite direction, to the point at distance $\delta/2$ from its partner. On the other hand, if the distance observed is not lower than $\delta$, it moves toward the point located $\delta/4$ before the midpoint.

After sufficiently many turns, the robots are found at a distance in the interval $[\delta/2, \delta)$, and both in state $A$. From now on, all their movements are rigid. If only one robot is activated, it reaches its partner and \rend is solved. Otherwise, they both turn $B$ and switch positions. Then, if both robots are activated, they gather in the midpoint. Otherwise, one of them turns $C$ and moves to the midpoint. Now, the robot still in $B$ keeps staying still because it observes a distance lower than $\delta/2$. On the other hand, the robot set to $C$ moves to its partner as soon as it is activated.
\end{proof}

\begin{algorithm}
\caption{\ Rendezvous for non-rigid \ssynch with knowledge of $\delta$ and three internal states}\label{alg1x}
\begin{algorithmic}[1]
\State $dist \gets \lVert other.position\rVert$
\If{$dist = 0$}
	\State terminate
\EndIf
\If{$me.state = A$}
	\If{$dist < \delta/2$} \Comment{reach the point at distance $\delta/2$ from the other}
		\State $me.destination \gets other.position\cdot (1-\delta/(2\cdot dist))$
	\ElsIf{$\delta/2 \leqslant dist < \delta$} \Comment{gather or switch positions}
		\State $me.state \gets B$
		\State $me.destination \gets other.position$
	\Else \Comment{$dist \geqslant \delta$, reach the point at distance $\delta/4$ from the midpoint}
		\State $me.destination \gets other.position\cdot (\oneh-\delta/(4\cdot dist))$
	\EndIf
\ElsIf{$me.state = B$}
	\If{$\delta/2 \leqslant dist < \delta$}
		\State $me.state \gets C$
		\State $me.destination \gets other.position/2$
	\EndIf
\Else \Comment{$me.state = C$}
	\State $me.destination \gets other.position$
\EndIf
\end{algorithmic}
\end{algorithm}

\subsection{\FC Robots}
\begin{algorithm}
\caption{\ Rendezvous for non-rigid \asynch with knowledge of $\delta$ and three externally visible states}\label{alg4x}
\begin{algorithmic}[1]
\State $dist \gets \lVert other.position\rVert$
\If{$other.state =$ \textsc{Start}}
	\If{$dist = 0$} \Comment{we have already gathered}
		\State $me.state \gets$ \textsc{Come}
	\ElsIf{$dist \leqslant \delta$} \Comment{moving $\delta/2$ away}
		\State $me.state \gets$ \textsc{Start}
		\State $me.destination \gets - other.position \cdot \delta / (2\cdot dist)$
	\ElsIf{$dist > 2\delta$} \Comment{moving $\delta/2$ in}
		\State $me.state \gets$ \textsc{Start}
		\State $me.destination \gets other.position \cdot \delta / (2\cdot dist)$
	\Else \Comment{$\delta < dist \leqslant 2\delta$, ready to gather}
		\State $me.state \gets$ \textsc{Ready}
	\EndIf
\ElsIf{$other.state =$ \textsc{Ready}}
	\State $me.state \gets$ \textsc{Come}
	\If{$\delta < dist \leqslant 2\delta$} \Comment{reaching the midpoint}
		\State $me.destination \gets other.position/2$
	\EndIf
\Else \Comment{$other.state =$ \textsc{Come}}
	\If{$dist = 0$} \Comment{we have gathered}
		\State $me.state \gets$ \textsc{Come}
		\State terminate
	\Else
		\State $me.state \gets$ \textsc{Ready}
			\State $me.destination \gets other.position$
	\EndIf
\EndIf
\end{algorithmic}
\end{algorithm}
 
\begin{theorem}\label{th7x}
In non-rigid \asynch , \rend of two \FC robots with knowledge of $\delta$ is solvable with three colors.
\end{theorem}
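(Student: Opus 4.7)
The plan is to follow the three phases implicit in Algorithm~\ref{alg4x}: while both lights show \textsc{Start}, the robots use $\delta/2$-sized rigid steps to steer the mutual distance into the interval $(\delta,2\delta]$; the first robot whose activation observes such a distance stops and advertises it by turning \textsc{Ready}; the partner then turns \textsc{Come} and halves the gap by moving to the midpoint; and the original \textsc{Ready} robot finally joins the partner by moving to the observed \textsc{Come}-position. I would track the pair (state of $r$, state of $s$) together with the true inter-robot distance and show that the system inexorably reaches the terminating configuration (\textsc{Come},\textsc{Come}) at distance $0$; the degenerate initial situation $dist=0$ is subsumed by the \textsc{Start}-branch ``if $dist=0$ then $me.state\gets\textsc{Come}$''.

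For the first phase, observe that whenever both lights show \textsc{Start} the only destinations the algorithm can issue are ``$\delta/2$ away from'' or ``$\delta/2$ toward'' the observed partner---both at distance exactly $\delta/2<\delta$, hence rigid---or no destination at all, accompanying the flip to \textsc{Ready} triggered by a distance in $(\delta,2\delta]$. Since each rigid step perturbs the true distance by exactly $\delta/2$ while the target interval has width $\delta$, the sequence of observed distances cannot jump over $(\delta,2\delta]$; combined with the monotonic behaviour in each regime (strict decrease above $2\delta$, strict increase at or below $\delta$) this yields termination of the \textsc{Start} phase after finitely many activations, with at least one robot in \textsc{Ready}.

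For the second and third phases, suppose $r$ has just turned \textsc{Ready} with the pair at distance $d\in(\delta,2\delta]$. If $s$'s next activation sees \textsc{Ready}, it flips to \textsc{Come} and rigidly reaches the midpoint $M$ (the move has length $d/2\le\delta$); $r$'s next activation then sees \textsc{Come} at distance $\le\delta$ and rigidly joins $s$ at $M$ via the destination $other.position$. Any subsequent activation sees the partner at distance $0$ and silently flips to \textsc{Come}, and the first one that observes the partner already in \textsc{Come} executes the terminate instruction. If instead $s$'s first post-transition look still perceives $r$ as \textsc{Start}, then no robot has moved and $d$ is still in $(\delta,2\delta]$, so $s$ also flips to \textsc{Ready}; both robots will then turn \textsc{Come}, reach the common midpoint, and \rend follows by the same rigid-join argument.

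The main obstacle is the asynchronous window in the second phase during which $r$ may look while $s$ is in transit to $M$ and therefore see $s$ in \textsc{Come} at some intermediate distance $d'\in[d/2,d]$ that may exceed $\delta$; the move issued by the \textsc{Come}-branch is then not a priori rigid. I would dispatch this by showing that, wherever the adversary halts $r$ on its segment toward the observed position of $s$, the distance between $r$'s new position and the eventual position $M$ of $s$ is at most $d/2\le\delta$, so every subsequent move of either robot toward the other is rigid and the residual analysis collapses either to the already-gathered case or to a rigid join executed on the very next activation that finds the partner in \textsc{Come}. Knowledge of $\delta$ is precisely what lets the algorithm distinguish the three regimes ``$\le\delta$'', ``$\in(\delta,2\delta]$'', and ``$>2\delta$'' that replace the rigidity assumption used in the preceding sections.
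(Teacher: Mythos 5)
Your proof follows essentially the same route as the paper's: a phase-by-phase analysis of Algorithm~\ref{alg4x} (steering the distance into $(\delta,2\delta]$ with $\delta/2$-steps, the \textsc{Ready}/\textsc{Come} handshake, and the asynchronous chase after a mid-transit observation), and your explicit bound showing the residual distance after the non-rigid chase is at most $d/2\leqslant\delta$ makes precise a step the paper only asserts. The one point to tighten is the first phase: in \asynch both robots may execute overlapping $\delta/2$-moves, so a single ``round'' can change the true distance by up to $\delta$ rather than $\delta/2$ (and a robot may complete a pending move after its partner has already turned \textsc{Ready}); the conclusion still holds because the target interval is half-open of width exactly $\delta$, but your stated justification covers only the sequential case.
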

\begin{proof}
We show that Algorithm~\ref{alg4x} correctly solves \rend. Both robots begin their execution in state \textsc{Start}, and attempt to position themselves at a distance in the interval $(\delta, 2\delta]$. To do so, they adjust their position by moving by $\delta/2$ at each step. When a robot sees its partner at the desired distance, it turns \textsc{Ready} and stops. It is easy to prove that, even if its partner is still moving, it will end its move at a distance in the interval $(\delta, 2\delta]$.

When a robot sees its partner set to \textsc{Ready}, it turns \textsc{Come} and moves to the midpoint. The midpoint is eventually reached, because the distance traveled is not greater than $\delta$. Assume that both robots turn \textsc{Ready}, both see each other, and move toward the midpoint. If robot $r$ reaches the midpoint and sees its partner $s$ still on its way and set to \textsc{Come}, $r$ turns \textsc{Ready} and keeps chasing $s$. When $s$ reaches its destination and sees $r$ set to \textsc{Ready} and at distance at most $\delta$, it stays still and waits until \rend is solved.

Similarly, assume that $r$ sees $s$ set to \textsc{Ready} and turns \textsc{Come} before $s$ sees $r$ set to \textsc{Ready}. $r$ will reach the midpoint and stay there, while $s$ will start chasing $r$ until they meet in the midpoint.
\end{proof}

\section{Open Problems}
We have shown that rendezvous can be obtained both in \FS and \FC, 
two models substantially weaker than the one of \cite{visbits}, where both internal memory and 
communication memory   capabilities are present. Our results open several new problems
and research questions.

 Our results, showing
that rendezvous is possible in \ssynch for \FS robots and in \asynch for \FC robots,
seem to indicate  that  ``it is better to communicate than to remember''.
However, determining the precise computational relationship between \FS and \FC is an open problem.
To settle it,  it must be determined whether or not  it is possible for \FS robots to rendezvous
in \asynch.

Although minimizing the amount of constant memory  was not the primary focus of this paper, the  number of states  
employed by our algorithms  is rather  small.  An interesting research question is to determine the smallest amount of
memory necessary for the robots to rendezvous when rendezvous is possible, and devise optimal solution protocols.

The knowledge of $\delta$ in non-rigid scenarios is quite  powerful and allows for simple solutions. It is an open problem to
study the {\em Rendezvous} problem  for \FS and \FC robots when $\delta$ is unknown or not known precisely.

This paper has extended the classical models of oblivious silent robots into
two directions: adding finite memory, and enabling finite communication. 
It  thus opens  the  investigation  in the \FS and \FC models
of  other  classical robots problems (e.g., {\em Pattern Formation}, {\em Flocking}, etc.);
an exception is  {\em Gathering} because, as mentioned in the introduction, it  is already solvable without
persistent memory and without communication \cite{gather}.

\end{document}